\documentclass[11pt]{article}

\usepackage[left=1in, top=1in, right=1in, bottom=1in]{geometry}
\usepackage{amsmath}
\usepackage{amssymb}
\usepackage{amsthm}
\usepackage{paralist}
\usepackage{algorithmicx}
\usepackage[noend]{algpseudocode}
\usepackage[linesnumbered,lined,boxed,commentsnumbered]{algorithm2e}
\usepackage{color}
\usepackage[numbers]{natbib}
\usepackage[pdfpagelabels,pdfpagemode=None]{hyperref}
\usepackage{aliascnt}

\usepackage{mathtools}
\usepackage{bm}

\newtheorem{theorem}{Theorem}%

\newaliascnt{lemma}{theorem}
\newtheorem{lemma}[lemma]{Lemma}%
\aliascntresetthe{lemma}

\newtheorem{claim}{Claim}%

\newtheorem{corollary}{Corollary}%

\theoremstyle{definition}
\newtheorem{definition}{Definition}

\newtheorem{example}{Example}

\newtheorem{remark}{Remark}

\newtheorem{fact}{Fact}

\renewcommand\footnotemark{}

\newcommand{\AutoAdjust}[3]{\mathchoice{ \left #1 #2  \right #3}{#1 #2 #3}{#1 #2 #3}{#1 #2 #3} }
\newcommand{\Xcomment}[1]{{}}

\newcommand{\InBrackets}[1]{\AutoAdjust{[}{#1}{]}}
\newcommand{\Ex}[2][]{\operatorname{\mathbf E}_{#1}\InBrackets{#2}}
\newcommand{\Prx}[2][]{\operatorname{\mathbf{Pr}}_{#1}\InBrackets{#2}}

\newcommand{\dd}{\mathrm{d}}  
\newcommand{\given}{\;\mid\;}
\newcommand{\eps}{\epsilon}

\newcommand{\noaccents}[1]{#1}

\newcommand{\R}{\mathbb{R}}
\newcommand{\cC}{\mathcal{C}}
\newcommand{\cD}{\mathcal{D}}
\newcommand{\cF}{\mathcal{F}}
\newcommand{\cL}{\mathcal{L}}

\newcommand{\newagentvar}[3][\noaccents]{%
\expandafter\newcommand\expandafter{\csname #2\endcsname}{#1{#3}}%
\expandafter\newcommand\expandafter{\csname #2s\endcsname}{#1{\boldsymbol{#3}}}%
\expandafter\newcommand\expandafter{\csname #2smi\endcsname}[1][i]{#1{\boldsymbol{#3}}_{-##1}}%
\expandafter\newcommand\expandafter{\csname #2i\endcsname}[1][i]{#1{#3}_{##1}}%
\expandafter\newcommand\expandafter{\csname #2ith\endcsname}[1][i]{#1{#3}_{(##1)}}%
}


\newcommand{\typedist}{\cD}

\newcommand{\sig}{s}

\newagentvar{type}{t}
\newagentvar{alloc}{x}
\newagentvar{pay}{p}
\newagentvar{typespace}{T}
\newagentvar{sigspace}{S}

\newcommand{\virt}{\varphi}


\newcommand{\highval}{h}


\newagentvar{allocopt}{x^*}
\newagentvar{payopt}{p^*}

\DeclareMathOperator{\secmax}{secmax}

\newagentvar{val}{v}
\newagentvar{util}{u}
\newagentvar{distmat}{\dist}
\newagentvar{payment}{p}
\DeclareMathOperator{\DRev}{DRev}
\DeclareMathOperator{\Rev}{Rev}

\newcommand{\bfmu}{\mathbf{\mu}}
\newcommand{\bflambda}{\mathbf{\lambda}}

\title{The Value of Information Concealment \footnote{The authors would like to thank Shuchi Chawla and Anna Karlin for helpful discussions during the early stage of this work.}}
\date{}

\author{
	Hu Fu \\
	University of British Columbia \\
	Department of Computer Science \\
	\tt{hufu@cs.ubc.ca}
	\and
	Chris Liaw \\
	University of British Columbia \\
	Department of Computer Science \\
	\tt{cvliaw@cs.ubc.ca}
	\and
	Pinyan Lu \\
	Shanghai University of Finance and Economics \\
	School of Information Management and Engineering \\
	\tt{lu.pinyan@mail.shufe.edu.cn}
	\and
	Zhihao Gavin Tang \\
	University of Hong Kong \\
	Department of Computer Science \\
	\tt{zhtang@cs.hku.hk}
}

\begin{document}

\begin{titlepage}
\maketitle

\begin{abstract}
We consider a revenue optimizing seller selling a single item to a buyer, on whose private value the seller has a noisy signal.  We show that, when the signal is kept private, arbitrarily more revenue could potentially be extracted than if the signal is leaked or revealed.  We then show that, if the seller is not allowed to make payments to the buyer, the gap between the two is bounded by a multiplicative factor of~$3$ when the value distribution conditioning on each signal is regular.  We give examples showing that both conditions are necessary for a constant bound to hold.

We connect this scenario to multi-bidder single-item auctions where bidders' values are correlated.  Similarly to the setting above, we show that the revenue of a Bayesian incentive compatible, ex post individually rational auction can be arbitrarily larger than that of a dominant strategy incentive compatible auction, whereas the two are no more than a factor of $5$ apart if the auctioneer never pays the bidders and if each bidder's value conditioning on the others' is drawn according to a regular distribution.  The upper bounds in both settings degrade gracefully when the distribution is a mixture of a small number of regular distributions.
\end{abstract}

\thispagestyle{empty}
\end{titlepage}

\section{Introduction}
\label{sec:intro}


Revenue maximization when selling a single item to a single buyer is a fundamental and well-understood problem in mechanism design.  In the classical model, the buyer holds a private value~$\val$ for the item, and the seller only knows the distribution~$\typedist$ from which $\val$ is drawn.  The revenue-maximizing strategy of the seller is to offer the buyer an optimal take-it-or-leave-it price.  This simple fact is one of the earliest insights offered by the revenue optimization literature \citep{M81, RZ83}, and also one of the key building blocks of many recent approximation results in more complicated scenarios \citep[e.g.][]{CHMS10, BILW16}.

Implicit in this classical model is a simple information structure: the buyer is completely informed, and the seller only partially so.  As a consequence,  when taking a price, the buyer usually derives a positive utility knows as the \emph{information rent} due to this information asymmetry.  In modern online markets, however, there is often information asymmetry in the other direction as well.  Even if the buyer always knows fully her value (which may or may not be the case in practice), the seller may still possess partial information of which the buyer cannot be sure.  For example, transactions in the near past with other buyers may suggest market trends which inform the seller but are not precisely known to the buyer.  The notorious obscurity of machine learning techonologies also contribute to such asymmetry --- whereas the seller may have more or less accurate estimates on the buyer's value based on various observable attributes, the buyer may not be able to determine which attributes of hers are picked up as cues by the seller's learning algorithm, and hence could not determine what estimate the seller has.  In these scenarios, the buyer is uncertain about the information held by the seller, even though she is completely informed of her value, seemingly the central information in the transaction.

This paper studies the implication of such information asymmetry on the seller's revenue.  The basic question we raise is: \emph{Are there nontrivial ways for the seller to exploit this asymmetry so as to increase his revenue?}  The answer, in short, is that sophisticated exploitation is possible and in some cases could drive up the revenue unboundedly compared with ``na\"ive'' uses of this information asymmetry; however, under a few conditions that seem naturally satisfied in practical scenarios, the benefit of such sophistication is bounded by a small muliplicative constant factor.  Below we explain what we mean by ``sophisticated'' and ``na\"ive'' pricings, after we lay down the basic model.


\paragraph{The Model.}

We model the simplest information asymmetry by having a value~$\val$ and a signal~$\sig$ drawn from a commonly known correlated distribution; the buyer only sees her value~$\val$ for the item and the seller only sees~$\sig$.  The seller aims to optimize his revenue, taking expectation over the joint distribution.

\emph{Na\"ive pricing:} Upon observing~$\sig$, the seller has a distribution over~$\val$ conditioning on~$\sig$.  We say the seller na\"ively uses the signal if he simply adopts the optimal strategy for each such conditional distribution --- as we saw above, this amounts to posting a take-it-or-leave-it price optimally for each~$\sig$.  It is easy to see that this is also the best strategy of the seller if $\sig$ is publicized or leaked.  In an e-commerce scenario, this could be the case where the seller publicizes his learning algorithm or whichever information gathered pertaining to the buyer.

It may come as a surprise that the na\"ive pricing may not be optimal; after all, in a sense, it optimizes the revenue pointwise.  The suboptimality of this approach lies precisely in its failure to exploit information asymmetry.  Publicizing the signal reduces its worth to a mere sharper estimte of the value distribution, and gives up its uncertainty, when kept private, in the buyer's eyes.  We show in \autoref{sec:examples} that \emph{this uncertainty can be leveraged to elicit more information from the buyer and hence increase the seller's revenue, in some cases by an arbitrarily large multiplicative factor.}


\paragraph{Main result.}

Our upper bound on the worth of concealing the signal is the most technically-sophisticated part of the paper.  Under two conditions, we show that the optimal revenue with private signal is no more than $3$ times the revenue of optimal posted price under each (public) signal.  The first condition asks that the seller never pay money to the buyer, which seems a natural requirement in many scenarios.  The second condition requires that each conditional distributions of the value~$\val$ is \emph{regular} in \citet{M81}'s sense.\footnote{Regularity is a standard notion in the literature of revenue optimization, to be precisely defined in \autoref{sec:prelim}.}  Viewed alternatively, our bound suggests that a seller looking to boost the value of private signals may consider making payments to the buyer occassionally or sharpening the signals in such a way that the resulting conditional value distributions have interleaved supports (see \autoref{ex:weird-dist} for an instance of this).  

For the readers familiar with the revenue optimization literature, we remark that regularity of value distributions usually does not mitigate the gaps between revenues of different mechanisms, not least because the most common distribution used in showing such large gaps, the equal revenue distribution, is itself regular.  As we explain later, the regularity condition enters our analysis in a non-standard way.  

Regularity is often deemed a natural condition when the buyer can be identified as a certain category of customer, whereas uncertainty on the category results in irregular distributions that is a mixture of regular distributions \citep[see, e.g.,][]{SS13}.  Our upper bound degrades gracefully for mixtures of a small number of regular distributions.



\paragraph{Connection to multi-bidder auctions.}

We are not the first to observe that correlated external information can be used to extract a buyer's information rent.  For multi-bidder auctions where the bidders' values are drawn from correlated distributions, classical works by \citet{CM85, CM88} showed ways to extract from each bidder her full value by leveraging her competitors' bids.   The similarity between the two settings is immediate when one takes the competitors' bids as private signals.  However, \citeauthor{CM85}'s results do not apply to us because we constrain our seller's mechanism much more stringently:  crucial for \citeauthor{CM85}'s mechanisms is that the auctioneer may charge the buyer anything as long as the buyer, conditioning on any value she holds, has nonnegative \emph{expected} utility.  This is known as \emph{interim} individual rationality (IR).  In contrast, we insist on the more realistic \emph{ex post} IR constraint, which stipulates that the buyer should never have a negative utility no matter what signal is realized.  This key difference is what makes our examples of unbounded gaps nontrivial and our constant upper bound possible.

In multi-bidder auctions,  the solution concept corresponding to na\"ive pricing or pricing with publicized signals is \emph{dominant strategy incentive compatibility} (DSIC), whereas the more sophisticated selling with private signals corresponds to \emph{Bayesian incentive compatible} (BIC) auctions.  The second main contribution of this paper is the first study of the revenues of DSIC and BIC single-item auctions, which complements several recent works that compare the revenues of DSIC and BIC \emph{multi-item} auctions \citep{TW16, Yao17}.  (Under the looser interim IR constraint, the revenue of DSIC and BIC auctions was almost fully characterized by \citet{CM85, CM88}.)  

We show unbounded gaps between the revenues of DSIC and BIC auctions (under ex post individual rationality); with non-negative payments and regular distributions, we show that \citet{R01}'s DSIC lookahead auction 5-approximates the optimal BIC revenue.  The first part follows immediately from the single-buyer examples, whereas the 5-approximation is technically more nontrivial, as we discuss below.


\paragraph{Our techniques.}

The driving horse for our main result is the duality framework which was recently promoted by \citet{CDW16}  and subsequently applied to various settings \citep[e.g.][]{FGKK16, DW17, DHP17}.  The basic approach is to write the optimal revenue of a target mechanism (in our case, the revenue with private signals, or the optimal BIC revenue) as the objective of a linear program and then Lagrangify the IC and IR constraints;  the value of the resulting Lagrangian with any dual variables serves as an upper bound on the optimal revenue and can be used as a benchmark for approximation.  What is peculiar about the argument in this work is that, since the the signals' uncertainty is key to revenue maximization, the dual variables controlling the buyer's incentives cannot be engineered for each value distribution conditioned on a signal.  That is, these dual variables cannot be functions of the signals.  On the other hand, standard dual variables giving rise to the public signal revenue via \citeauthor{M81}'s \emph{virtual values} are distribution dependent.  In general, distribution-independent dual variables do not produce quantities that can be bounded by the virtual values.  We overcome this difficulty by constructing dual variables which, besides the ``virtual value terms'', generate another term that grows in the worse case with the welfare.  A fairly non-standard argument shows that regularity brings this additional term down to comparable with na\"ive pricing revenues (or DSIC revenue in multi-bidder auctions).

Multi-bidder auctions pose another difficulty: the seller not only maximizes the revenue from each bidder, but is also subject to the item's availability: two bidders cannot be both assigned the item.  Revenue optimal auctions are already complicated even under the more stringent DSIC constraint \citep[e.g.\@][]{DFK15, PP15, CKK16}.  One way to disentangle multiple bidders' interaction is provided by \citet{R01}'s DSIC \emph{lookahead auction}, which maximizes its revenue from the highest bidders and gives up lower bidders.  \citeauthor{R01} showed that this 2-approximates the optimal DSIC revenue.  However, the natural extension of \citeauthor{R01}'s auction to the BIC auctions fails, due to a subtle reason we explain briefly at the end of the paper (\autoref{sec:conclusion}).  Instead, we take the duality approach again, and appeal to a property possessed by the particular dual variables we constructed for the single buyer case (\autoref{cl:dual-bound} in \autoref{sec:lookahead}).  This allows us to bound the lower bidders' contribution in the Langrangian by twice the second price auction revenue, and the problem boils down to maximizing revenue from the highest bidder, which is a single buyer problem, where we could apply our previous argument.  We note that bounding lower bidders' contribution to revenue by the second highest value is the essence of \citet{R01}'s original proof, and the idea that this can be done in the Langragian was implicit in several duality-based works \citep[e.g.][]{CDW16, EFFTW17}.

\section{Preliminaries}
\label{sec:prelim}

\paragraph{Single-Buyer Pricing with Signals.}


A seller tries to sell a single item to a buyer, who has a private type $\type$, which indicates the buyer's value of the item; throughout the paper we use types $\type$ and values $\val$ interchangeably.  
The seller observes a private signal $\sig \in \sigspace$.  The pair $(\type, \sig)$ is drawn from a commonly known joing distribution $\cD$, whose density function we denote by $f(\cdot, \cdot)$.  We also write 
$F_s(t) \coloneqq \int_{t' \leq t} f(t',s) \: \dd t'$.
A selling mechanism consists of an allocation rule $\alloc(\val, \sig)$, indicating the probability with which the buyer gets the item when she reports her value as~$\val$ and the seller sees signal~$\sig$, and a payment rule $\payment(\val, \sig)$, the payment made by the buyer to the seller.  We say the mechanism has \emph{no negative payment} if $\payment$ is never negative, i.e., the seller never pays the buyer.  The expected revenue of a mechanism is $\Ex[(\val, \sig) \sim \cD]{\payment(\val, \sig)}$.

In \emph{na\"ive pricing}, or \emph{pricing with publicized or leaked signals}, the seller adopts, for each observed~$\sig$, a mechanism for the value distribution conditioned on~$\sig$.  Such a mechanism needs to be (i) ex post individually rational (IR): for any $\val$ and $\sig$, $\val \alloc(\val, \sig) - \payment(\val, \sig) \geq 0$, and (ii) incentive compatible (IC): for any $\val$, $\val'$ and $\sig$, $\val \cdot \alloc(\val, \sig) - \payment(\val, \sig) \geq \val \cdot \alloc(\val', \sig) - \payment(\val', \sig)$.  

\begin{definition}[Virtual Values]
	Given a value distribution with density function~$f$ and cumulative density function~$F$, the \emph{Myerson virtual value} is $\virt(\val) \coloneqq \val - \frac{1 - F(\val)}{f(\val)}$.  Given a joint distribution $\cD$ on $(\val, \sig)$, we denote by $\virt_\sig(\val)$ the virtual value of $\val$ in the conditional distribution $F_\sig$ given $\sig$.
\end{definition}

\begin{lemma}[\citeauthor{M81}\citeyear{M81}]
\label{lem:Myerson}
A mechanism is IC only if its allocation rule~$\alloc$ is monotone non-decreasing, and its expected revenue for a value distribution~$F$ (without signals) is the \emph{virtual surplus} $\int_{\val} \virt(\val) \alloc(\val) \: \dd F(\val)$.  The revenue-optimal mechanism for any~$F$ posts a take-it-or-leave-it price.
\end{lemma}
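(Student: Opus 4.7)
The plan is to follow Myerson's classical three-step argument: extract monotonicity and a payment identity from the IC constraint, rewrite expected revenue as a virtual surplus via integration by parts, and finally argue that for a single buyer the virtual-surplus optimizer collapses to a posted price.

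First I would fix values $\val < \val'$ and write the two IC inequalities that swap the reports. Adding them yields $(\val' - \val)(\alloc(\val') - \alloc(\val)) \geq 0$, which gives monotonicity of $\alloc$. The same pair of inequalities sandwiches the buyer's utility difference $u(\val') - u(\val)$ between $\alloc(\val)(\val'-\val)$ and $\alloc(\val')(\val'-\val)$, so $u$ is absolutely continuous with $u'(\val) = \alloc(\val)$ almost everywhere. Integrating and using ex post IR to set $u(0)=0$ (which is clearly revenue-optimal given that $u(0)\geq 0$ and raising payments by a constant does not break IC), I obtain the payment identity $\payment(\val) = \val\,\alloc(\val) - \int_0^\val \alloc(t)\,\dd t$.

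Next I would plug this identity into the expected revenue $\int \payment(\val) f(\val)\,\dd\val$ and swap the order of integration in the double integral: $\int_0^\infty \int_0^\val \alloc(t)\,\dd t\, f(\val)\,\dd\val = \int_0^\infty \alloc(t)(1-F(t))\,\dd t$. Collecting terms gives
\[
\Ex[\val \sim F]{\payment(\val)} = \int \left(\val - \frac{1 - F(\val)}{f(\val)}\right) \alloc(\val)\, f(\val)\, \dd\val = \int \virt(\val)\,\alloc(\val)\, \dd F(\val),
\]
which is the virtual-surplus expression claimed.

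Finally, for the optimal-mechanism statement, I would note that in the single-buyer case any monotone allocation rule $\alloc:[0,\infty)\to[0,1]$ can be written as a distribution over threshold allocations $\mathbf{1}[\val \geq p]$: namely $\alloc(\val) = \int_0^\infty \mathbf{1}[\val \geq p]\,\dd\mu(p)$ for a suitable measure $\mu$ derived from $\alloc$. Expected revenue is linear in $\alloc$, so it is a convex combination of the revenues $p\cdot\Pr_{\val\sim F}[\val \geq p]$ of posted-price mechanisms, and hence is maximized by picking a single $p^*$ that attains the supremum. The hardest step to write carefully is this last decomposition-into-posted-prices argument, since one needs to verify that the induced payment rule really is the one prescribed by the payment identity (so that the resulting posted price $p^*$ indeed achieves the virtual-surplus bound); everything preceding it is a standard envelope/integration-by-parts calculation.
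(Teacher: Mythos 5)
The paper does not prove this lemma; it cites it directly to \citet{M81}, so there is no in-paper proof to compare against. Your argument is the standard Myerson proof and its main lines are correct: deriving monotonicity by adding the two swapped IC inequalities, the envelope sandwich giving $u' = \alloc$ a.e., normalizing $u(0)=0$ (legitimate under ex post IR and revenue-maximality), and the Fubini/integration-by-parts step to reach the virtual surplus are all right.

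On the final step, your instinct to decompose a monotone $\alloc$ into threshold allocations is the right one, and it is worth noting that this approach is actually needed here: the lemma claims posted-price optimality for arbitrary $F$, not just regular $F$, so a "pointwise maximize virtual surplus" argument would require ironing, whereas your decomposition handles irregular $F$ for free. The worry you flag about "verifying the induced payment rule is the one prescribed by the payment identity" is not really a gap --- once revenue is expressed as $\int \virt(\val)\alloc(\val)\,\dd F(\val)$, linearity in $\alloc$ plus $\alloc(\val) = \int_0^\infty \mathbf{1}[\val \geq p]\,\dd\mu(p)$ gives, after swapping integrals and using $\int_{\val \geq p} \virt(\val)\,\dd F(\val) = p(1-F(p))$, that revenue equals $\int p(1-F(p))\,\dd\mu(p)$, a $\mu$-average of posted-price revenues with $\mu$ a sub-probability measure. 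That immediately bounds revenue by $\sup_p p(1-F(p))$ without any further consistency check on payments; at most you need a mild remark about the supremum being attained (or approached) for the distributions under consideration. The one cosmetic caveat is that if $\alloc$ does not increase from $0$ to $1$ your $\mu$ is a sub-probability measure and you should place any mass $\alloc(0^+)$ at $p = 0$, which contributes zero revenue; this does not affect the conclusion.
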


\autoref{lem:Myerson} implies that with public signals, posting a price optimal for each signal~$\sig$ maximizes the seller's revenue.

For mechanisms with \emph{private signals}, besides being (i) ex post IR, the IC constraint weakens to: (ii') for all $\val$ and $\val'$, $\Ex[\sig \sim \cD(\sig | \val)]{\val \cdot \alloc(\val, \sig) - \payment(\val, \sig)} \geq \Ex[\sig \sim \cD(\sig | \val)]{\val \cdot \alloc(\val', \sig) - \payment(\val', \sig)}$, where $\cD(\sig | \val)$ denotes the conditional distribution of $\sig$ given~$\val$.  Equivalently, the constraint can be written as $\int_{\sig} (\val \cdot \alloc(\val, \sig) - \payment(\val, \sig)) f(\val, \sig) \: \dd \sig \geq \int_{\sig} (\val \cdot \alloc(\val', \sig) - \payment(\val', \sig)) f(\val, \sig) \: \dd \sig$.

\begin{definition}[(Jointly) Regular Distributions]
	A value distribution $F$ (without signals) is \emph{regular} if its support is an interval $[l, h] \subset \mathbb R$, and its virtual value $\virt$ is nondecreasing on $[l, h]$.  A distribution $\cD$ on $(\val, \sig)$ is \emph{jointly regular} if the conditional distribution of~$\val$ given any signal~$\sig$ is regular.
\end{definition}
Note that joint regularity does not require the conditional value distributions to be supported on the same interval given different signals.



\paragraph{Single-item (Multi-bidder) Auctions}
In a single item auction, one indivisible good is to be allocated to at most one of $n$ bidders.
Bidder $i$ has a private type $t_i \in \R$,  and the types are drawn from a possibly correlated, commonly known joint distribution $\cD$.
We write $\types = (t_1, \ldots, t_n)$ to denote the profile of types and $\typesmi = (\typei[1], \ldots, \typei[i-1], \typei[i+1], \ldots, \typei[n])$.
A mechanism (auction) 
is specified by its allocation rules $\alloci \colon \types \mapsto [0,1]$ and payment rules $\paymenti \colon \types \mapsto \R$.
The utility of bidder $i$ is $u_i(\types) = \typei \cdot \alloci(\types) - \paymenti(\types)$.
The allocation rules are required to satisfy the feasbility constraint: $\sum_i \alloci(\types) \leq 1$, for all $\types$ and~$i$.

Similar to the IC constraint under public signals in single buyer pricing, a mechanism is \emph{dominant strategy incentive compatible} (DSIC) if for each bidder $i$, each type profile $\bm{t}$, and any deviation $\typei'$,
\[
   t_i \cdot x_i(\bm{t}) - p_i(\bm{t}) \geq t_i \cdot x_i(t_i', \bm{t}_{-i}) - p_i(t', \bm{t}_{-i}).
\]
Similar to the IC constraint under private signals, an auction is \emph{Bayesian incentive compatible} (BIC) if for each bidder $i$ with type $t_i$ and possible deviation~$\typei'$,
\[
   \Ex[\bm{t}_{-i} \sim \cD_{\bm{t}_{-i} \mid t_i}]{t_i \cdot x_i(\bm{t}) - p(\bm{t})} \geq
   \Ex[\bm{t}_{-i} \sim \cD_{\bm{t}_{-i} \mid t_i}]{t_i \cdot x_i(t_i', \bm{t}_{-i}) - p(t_i', \bm{t}_{-i})},
\]
where $\cD_{\bm{t}_{-i} \mid t_i}$ denotes the distribution of $\typesmi$ conditioned on bidder~$i$'s type being $t_i$.
 A mechanism is \emph{ex-post individually rational} (IR) if $t_i \cdot x_i(\bm{t}) - p_i(\bm{t}) \geq 0$ for any type profile $\bm{t}$
and \emph{interim IR} if $\Ex[\bm{t}_{-i} \sim \cD_{\bm{t}_{-i} \given t_i}]{t_i \cdot x_i(\bm{t}) - p_i(\bm{t})} \geq 0$.

\citet{M81}'s seminal paper showed that, for a product distribution~$\cD$, a DSIC, ex post IR auction maximizes revenue among all BIC, interim IR auction.  In sharp contrast, for correlated distributions, \citet{CM85, CM88} showed that, except for degenerate cases, interim IR auctions can extract full surplus, i.e., $\Ex[\types]{\max_i \typei}$.  Throughout this paper we require the mechanisms to be ex post IR.

\paragraph{The Lookahead Auction.}

\citet{R01}'s lookahead auction reduces multi-bidder DSIC revenue maximization to pricing for a single bidder, while keeping at least half of the optimal revenue.

\begin{definition}[(Dominant Strategy) Lookahead Auction]\label{def:DSLA}
For each bid vector~$\types$, the lookahead auction choose the highest bidder~$i^*$, and offer an optimal take-it-or-leave-it price conditioning on (i) $\typesmi[i^*]$ and (ii) $\typei[i^*]$ being the highest type.  
\end{definition}

\begin{theorem}[\citeauthor{R01}\citeyear{R01}]
The dominant strategy lookahead auction is DSIC and ex post IR, and extracts at least half of the revenue of the optimal DSIC, ex post IR auction.  
\end{theorem}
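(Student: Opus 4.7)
The plan is to verify the two mechanism-design properties separately from the $\frac{1}{2}$-approximation, which follows from a revenue decomposition that treats the highest bidder differently from the rest.

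\emph{DSIC and ex-post IR.} Fix bidder $i$ with true type $\typei$ and others' reports $\typesmi$, and let $v_{2,i} = \max_{j \neq i} \typei[j]$. The price $p^*(\typesmi)$ that LA posts when $i$ becomes the highest bidder is the Myerson-optimal price for the conditional distribution of $\typei$ given $\typesmi$ and $\typei \geq v_{2,i}$; in particular $p^*(\typesmi) \geq v_{2,i}$, and $p^*$ does not depend on $i$'s own report. Truth-telling is then optimal: if $\typei \leq v_{2,i}$, the only way to change the outcome is to over-report to become highest, which triggers a price $p^* \geq v_{2,i} \geq \typei$ and yields weakly negative utility; if $\typei > v_{2,i}$, bidder $i$ faces a take-it-or-leave-it offer at $p^*$, for which truthful acceptance/rejection is optimal. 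Ex-post IR is immediate because acceptance happens only when $\typei \geq p^*$.

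\emph{The $2$-approximation.} For any DSIC, ex-post IR auction with rules $(x_i, p_i)$, decompose its expected revenue as $\Ex[\types]{p_{i^*}(\types)} + \Ex[\types]{\sum_{i \neq i^*} p_i(\types)}$, where $i^* = i^*(\types)$ denotes the highest bidder. The second summand is at most $\Ex{v_2(\types)}$, since ex-post IR gives $p_i \leq \typei x_i$ and $\sum_{i \neq i^*} \typei x_i \leq v_2 \sum_{i \neq i^*} x_i \leq v_2$. For the first summand, fix $\typesmi$: by Myerson's characterization the mechanism restricted to $i$'s report is a posted-price mechanism at some threshold $\pi(\typesmi)$, and its revenue from $i$ conditional on $i$ being highest equals $\pi \cdot \Pr\InBrackets{\typei \geq \max(\pi, v_{2,i}) \mid \typesmi}$. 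For $\pi \geq v_{2,i}$ this is precisely the posted-price revenue on the truncated distribution that LA's $p^*$ maximizes, while $\pi < v_{2,i}$ is dominated by replacing $\pi$ with $v_{2,i}$, itself a feasible choice for LA. Summing over $i$ gives $\Ex{p_{i^*}} \leq \text{LA}$. Finally, since posting $v_{2,i}$ is always an option for LA, its conditional revenue given $i$ is highest is at least $v_{2,i}$, so $\Ex{v_2} \leq \text{LA}$. Combining yields an OPT revenue bound of $2\,\text{LA}$.

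\emph{Main obstacle.} The inequality $\Ex{p_{i^*}} \leq \text{LA}$ is the technical crux: one must use Myerson's characterization to reduce the sub-mechanism seen by each $i$ to a single posted price, and then handle the ``interior'' case $\pi < v_{2,i}$ by observing that posting $v_{2,i}$ captures the same event at no smaller price, so LA's optimal price on the truncated distribution dominates all alternatives. The DSIC check for over-reporting deviations by a non-highest bidder is the other subtle step, relying crucially on $p^*(\typesmi) \geq v_{2,i}$.
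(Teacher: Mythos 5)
The paper does not prove this theorem; it is cited from Ronen (2001) and stated without proof in the preliminaries, so there is no in-paper argument to compare against. Judged on its own, your proof is essentially correct and follows the standard Ronen-style decomposition: $\Rev(\mathrm{OPT}) = R_1 + R_2$ with $R_1$ the expected payment of the highest bidder, bounding $R_2 \le \Ex{v_2} \le \Rev(\mathrm{LA})$ by ex-post IR and feasibility, and bounding $R_1 \le \Rev(\mathrm{LA})$ by a pointwise posted-price comparison on each conditional distribution truncated at the second-highest value. The DSIC/IR verification via $p^*(\typesmi) \ge v_{2,i}$ is also the right observation, since any price below the lower endpoint of a truncated support is weakly dominated.

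One small gap worth flagging: you invoke ``Myerson's characterization'' to say the sub-mechanism seen by bidder $i$ given $\typesmi$ is a posted price at a single threshold $\pi(\typesmi)$. Myerson's lemma only gives monotonicity of $x_i(\cdot, \typesmi)$; a randomized DSIC mechanism can have a strictly fractional, increasing allocation. You should either (i) note that for a single item the optimal DSIC mechanism can be taken deterministic (an extreme-point argument), or (ii) observe that any monotone $x_i(\cdot,\typesmi)$ with its Myerson payment is a mixture of posted-price mechanisms with total mass at most $1$, so the per-threshold bound $\pi\cdot\Prx{\typei \ge \max(\pi, v_{2,i}) \mid \typesmi} \le \Rev_i(\mathrm{LA}\mid\typesmi)$ averages to the desired inequality. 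Either patch is short, and with it the argument is complete.

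As a side comment: your route to $R_1 \le \Rev(\mathrm{LA})$ is a direct posted-price comparison, which avoids the step (alluded to in the paper's conclusion) of arguing that the original optimal auction, with non-highest bidders' allocations and payments zeroed out, remains DSIC. Your direct comparison is cleaner, since the ``zeroing-out preserves DSIC'' claim requires its own care about over-reporting deviations by non-highest bidders and is not needed for the revenue bound.
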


We explicitly designate \citeauthor{R01}'s auction as dominant strategy lookahead, since there is a natural extension of it for BIC auctions, which we call the Bayesian lookahead auction.  However, we are not able to show general revenue guarantee for it. (See \autoref{sec:conclusion} for a discussion.)




\paragraph{Lagrangian Relaxations.}

Proving revenue approximation results via partial Lagrangians was promoted by \citet{CDW16}.  We review the basic idea here.   Consider any constrained optimization problem:
\begin{equation}
\label{eq:lr}
\begin{aligned}
   \text{maximize:} \quad & f(x) \\
   \text{subject to:} \quad & g_{\alpha}(x) \geq 0, & \forall \alpha \in A \\
	& x \in \cC
\end{aligned}
\end{equation}
Let $\mu$ be any nonnegative measure on $A$.  The Lagrangian of~\eqref{eq:lr} is $\cL(x,\lambda) \coloneqq f(x) + \int_{A} g_{\alpha}(x) \lambda(\alpha) \: \dd \mu(\alpha)$,
where $\lambda \in \R^A$ are called the Lagrange multipliers, or the dual variables.  The following fact is immediate.
\begin{fact}
   \label{fact:lagrangian}
   Let $\cC'$ be $\cC \cap \{ x \mid g_{\alpha}(x) \geq 0, \forall \alpha \in A \}$ and assume $\cC' \neq \emptyset$.  
   Then for any $\lambda \geq 0$, the optimal value of~\eqref{eq:lr} is bounded above by $\sup_{x \in \cC'} \cL(x, \lambda)$.
\end{fact}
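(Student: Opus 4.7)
The plan is a one-line pointwise bound followed by taking a supremum. First I would use the definition of $\cC'$ to rewrite the optimal value of~\eqref{eq:lr} as $\sup_{x \in \cC'} f(x)$, so that the task reduces to showing the pointwise inequality $f(x) \leq \cL(x,\lambda)$ for every $x \in \cC'$.

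To establish this pointwise inequality, I would argue as follows. By the definition of $\cC'$, any $x \in \cC'$ satisfies $g_\alpha(x) \geq 0$ for every $\alpha \in A$. Combined with the hypothesis $\lambda \geq 0$ and the nonnegativity of the measure $\mu$, the integrand $g_\alpha(x)\lambda(\alpha)$ is nonnegative $\mu$-almost everywhere, so
\[
\int_A g_\alpha(x)\,\lambda(\alpha) \, \dd\mu(\alpha) \geq 0.
\]
Adding this nonnegative quantity to $f(x)$ yields exactly $\cL(x,\lambda)$, giving $f(x) \leq \cL(x,\lambda)$. Taking $\sup_{x \in \cC'}$ on both sides delivers the stated bound.

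Frankly, there is no real obstacle in this proof — the fact is an immediate consequence of the sign conditions on $\lambda$ and $\mu$ together with feasibility on $\cC'$. The only implicit technicality is measurability of $\alpha \mapsto g_\alpha(x)\lambda(\alpha)$ so that the integral against $\mu$ is well-defined, which I would treat as part of the standing setup for~\eqref{eq:lr}. The substance of the fact lies not in its proof but in how it is used downstream: subsequent sections will engineer specific dual variables $\lambda$ (notably ones independent of the signal, in order not to destroy the informational asymmetry that is the source of revenue) so that $\sup_{x \in \cC'} \cL(x,\lambda)$ admits a clean upper bound in terms of a benchmark such as naïve pricing revenue or the lookahead auction's revenue, thereby yielding the desired approximation guarantees.
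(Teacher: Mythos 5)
Your proof is correct and is precisely the standard weak-duality argument the paper has in mind; the paper itself gives no proof and simply declares the fact ``immediate.'' You also correctly handle the (slight) refinement that the supremum is taken over $\cC'$ rather than all of $\cC$, which still works since the pointwise inequality $f(x) \leq \cL(x,\lambda)$ is only needed on the feasible set.
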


Lagrangifying only some constraints (here $g_\alpha(x) \geq 0$) while leaving others as is (here $x \in \cC$) proves often convenient for $\cL$ to be used as a benchmark to be approximated.  In revenue maximization, often the IC and IR constraints are Lagrangified, while the feasibility constraints are left aside.


\paragraph{Equal Revenue Distributions}
Our large gap examples make use of the so-called (truncated) equal revenue distribution, which is the most commonly used distribution with unbounded gap between the surplus $\Ex{\val}$ and the optimal revenue with pricing.  
\begin{definition}[Equal Revenue Distribution]\label{def:ERD}
An equal revenue distribution truncated at $h$ is supported on $[1, h]$, with cumulative density function 
\begin{align*}
	F(\val) = \left\{ 
		\begin{array}{ll}
			1 - \frac 1 \val, & \text{for } \val \in [1, h); \\
			1, & \text{if } \val = h.
		\end{array}
		\right.
\end{align*}
\end{definition}
	It is easy to verify that $\Ex{\val} \approx \log h$, whereas any posted price extracts revenue of only~$1$.  Note that the equal revenue distribution is regular.


\section{Single-Buyer Pricing with Signals}
\label{sec:positive}

In this section we consider the problem of single-buyer single-item pricing with signals, and show our first main result: 

\begin{theorem}
   \label{thm:bounded_bsic}
   In the single-buyer pricing problem with signals, 
   suppose the distribution is jointly regular and there is no negative payment, then the revenue achievable by a mechanism with private signals is bounded by 3 times the revenue extractable by na\"ive pricing, or mechanisms with public signals.  Moreover, if the distribution is a mixture of $k$ jointly regular distributions, the gap between the two is bounded by~$3k$.
\end{theorem}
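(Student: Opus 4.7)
).}
The plan is to apply the partial-Lagrangian duality framework of \citet{CDW16}, writing the optimal revenue with private signals as an LP in variables $x(v,s)$ and $p(v,s)$, Lagrangifying the BIC and ex post IR constraints while leaving $p\geq 0$ and the allocation-feasibility constraints as hard constraints. As the introduction emphasizes, the structural catch is that each BIC constraint is of the form $\int_s[\cdots]f(v,s)\,\dd s\geq 0$, so its multiplier $\lambda(v,v')$ must be a scalar independent of $s$, whereas Myerson per-signal virtual values $\varphi_s$ would naturally demand signal-dependent duals.

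My first step is to construct a signal-independent BIC flow $\lambda(v,v')$ in $v$-space whose integration-by-parts places the per-signal ``Myerson tail'' $\int_{v'>v}f(v',s)\,\dd v'$ on the coefficient of $x(v,s)$, so that the virtual value $\varphi_s(v)f(v,s)$ appears intact as a summand. Since this forces a nonzero (though $s$-averaged-to-zero) coefficient on $p(v,s)$, I then introduce signal-dependent ex post IR multipliers $\mu(v,s)\geq 0$ set equal to the positive part of the residual --- the minimal choice that keeps $\sup_{p\geq 0}\mathcal{L}$ finite given $p\geq 0$. With these duals, the Lagrangian collapses to $\int c(v,s)\,x(v,s)\,\dd v\,\dd s$ with $c(v,s)=\varphi_s(v)\,f(v,s)+\eta(v,s)$: the first summand, by \autoref{lem:Myerson} and joint regularity, is dominated against any feasible $x$ by the naive-pricing revenue $R^{\mathrm{naive}}$, and $\eta$ is a residual produced by the interaction of $\mu$ with the signal-independent $\lambda$.

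The central step is to show $\int\eta(v,s)\,x(v,s)\,\dd v\,\dd s\leq 2R^{\mathrm{naive}}$, whence the factor $3$ follows. A crude bound on $\eta$ only yields welfare, which is exactly what powers the unbounded-gap examples in \autoref{sec:examples}. My plan is to exploit joint regularity signal by signal: for each $s$, $\varphi_s$ is monotone, so the set $\{v:\varphi_s(v)\geq 0\}$ is an interval $[r_s,\infty)$, and Myerson's identity $\int_{v\geq r_s}\varphi_s(v)\,f(v,s)\,\dd v=r_s(1-F_s(r_s))\cdot f(s)$ tightly matches the virtual-value contribution on the selling region to the per-signal posted-price revenue. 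For $\eta$, I would swap the order of integration by Fubini, use monotonicity of $\varphi_s$ to isolate the contribution on $\{v\geq r_s\}$, and then charge it pointwise in $s$ to $r_s(1-F_s(r_s))\,f(s)$, relying on the no-negative-payment constraint to prevent the interim-IR loophole that otherwise enables Cr\'emer--McLean full-surplus extraction. I expect this charging step to be the main obstacle: it is the ``non-standard'' use of regularity alluded to in the introduction, since regularity usually enables monotone optimal allocations rather than taming residuals forced by distribution-independent duals.

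For the mixture refinement, writing $f=\sum_{j=1}^k\alpha_j f_j$ with each $f_j$ jointly regular, my plan is to exploit that the Lagrangian is linear in $f$ for any fixed duals, $\mathcal{L}(f;\lambda,\mu)=\sum_j\alpha_j\mathcal{L}(f_j;\lambda,\mu)$, run the regular-case argument componentwise with duals adapted to each $f_j$ to obtain an upper bound of $3\sum_j\alpha_j R^{\mathrm{naive}}(f_j)$, and convert this to $3k\cdot R^{\mathrm{naive}}(f)$ via the pigeon-hole inequality $\max_r\sum_j\alpha_j h_j(r)\geq\tfrac 1k\sum_j\alpha_j\max_r h_j(r)$ for nonnegative $h_j$, applied with $h_j(r)=r\int 1\{v\geq r\}f_j(v,s)\,\dd v$ signal by signal to yield $R^{\mathrm{naive}}(f)\geq\tfrac 1k\sum_j\alpha_j R^{\mathrm{naive}}(f_j)$.
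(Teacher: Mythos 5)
Your proposal follows the paper's approach closely: partial Lagrangian duality with a signal-independent flow $\lambda(v,v')$ (the paper's concrete choice is $\lambda^*(t,t')=2/t$ for $t'\leq t$), a signal-dependent $\mu(v,s)$ set to the positive part of the payment-side residual (exactly the step that uses $p\geq 0$), and then a Fubini-plus-regularity argument bounding the residual allocation-side term by $\DRev(s)$ per signal, giving the factor $3$. One small imprecision worth flagging: the residual term $\psi_s(t)=2\int_{t'\geq t}\frac{f(t',s)}{t'}\dd t' - \int_{t'\geq t}\frac{f(t',s)}{t}\dd t'$ is nonnegative on an interval $[l,u]$ straddling the Myerson reserve $r_s$, not on a ray $\{v\geq r_s\}$, and the paper's key step (Lemma~\ref{lem:bounded2}) bounds $\int_l^u t\psi_s(t)\,\dd t$ by writing it as $\int_0^u-\int_0^l$ and showing via Fubini that $\int_0^r t\psi_s(t)\,\dd t=\int_{t'>r}(r^2/t'-r)f(t',s)\,\dd t'\in[-\DRev(s),0]$ for every $r$; your ``isolate on $\{v\geq r_s\}$'' sketch should be sharpened to this two-sided argument. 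The mixture step is also effectively the paper's (it uses subadditivity of $[\cdot]_+$ together with $\alpha_j\DRev_j(s)\leq\DRev(s)$, which yields the same bound as your pigeonhole inequality).
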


Before we give the proof in \autoref{sec:single-proof}, it is instructive to see two examples, given in \autoref{sec:examples}, which demonstrate that both conditions required by the theorem (nonnegative payment and joint regularity) are necessary for the gap between private and public signal revenues to be finitely bounded.  

	\subsection{Examples with Unbounded Value of Private Signals}
\label{sec:examples}

Our first example demonstrates that negative payments can drastically amplify the power of the private signals.

\begin{example}
	\label{ex:neg-payment}
	Let the buyer's value~$\val$ be drawn from an equal revenue distribution truncated at $H$ (see \autoref{def:ERD}).  
	For $\eps > 0$, let the signal $\sig$ be equal to $\val$ with probability~$\eps$, and $\sig = *$ otherwise.
\end{example}

Note that the distribution in this example is jointly regular. 
By the property of equal revenue distribution, with public signals, the revenue is at most $1 + \eps \ln H$.  The next theorem shows that, with private signals, the seller can get revenue $\Omega((1 - \eps) \ln \ln H)$; when $\eps$ approaches~$0$ (i.e., the ``informative'' signals vanish), the latter is arbitrarily more in comparison.  The mechanism extracting this revenue is nontrivial, a remarkable feature of it being that the ``informative'' signals are not used to extract revenue directly but to entice the buyer into truth-telling: a payment is made to the buyer if the reported value agrees with $\sig$.

\begin{theorem}
	The optimal revenue with private signals is $\Omega((1 - \eps) \ln \ln H)$. 
\end{theorem}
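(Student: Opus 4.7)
The plan is to construct a direct-revelation mechanism with private signals whose expected revenue attains the claimed bound, and the key technical idea is to use negative payments (rebates to the buyer) to make truthful reporting incentive-compatible --- exactly as foreshadowed by the surrounding text. I would begin by discretizing the value support into doubly-exponential buckets: set $v_j = 2^{2^j}$ for $j = 0, 1, \ldots, K$ with $K = \lfloor \log_2 \log_2 H \rfloor$, and let $B_j = [v_j, v_{j+1})$. The truncated equal-revenue distribution gives $\Pr[v \in B_j] = v_j^{-1} - v_{j+1}^{-1}$, and the double-exponential growth yields the key telescoping identity $\sum_{j} \Pr[v\in B_j]\cdot v_j = \sum_j (1 - v_j/v_{j+1}) = \Theta(K) = \Theta(\ln \ln H)$, which will be the source of the $\ln \ln H$ factor in the final bound.

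The mechanism is a menu in which the buyer declares a bucket $j$. For each report $j$ and realized signal $\sig$ I would prescribe three branches: (a) if $\sig = *$ (probability $1-\eps$), allocate the item and charge $v_j$; (b) if $\sig \in B_j$ (informative, matches the declaration), allocate and transfer a rebate $R_j$ back to the buyer (a negative payment), where $R_j$ is chosen as the smallest value making BIC hold against the most attractive lie; (c) if $\sig$ is informative but falls outside $B_j$ (i.e., the declaration is inconsistent with the observed value), the item is withheld and no transfer is made --- an allocation-based penalty that is automatically consistent with ex-post IR. The rebate $R_j$ comes out by equating the interim utility of truthful reporting with the utility of a downward lie, and the tightest constraint is a lie all the way down to the cheapest bucket, which fixes $R_j$ up to constants in $\eps$.

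The proof then proceeds in three standard checks followed by a revenue calculation. First, ex-post IR is verified on every $(v, \sig)$ in the support: the $\sig=*$ branch requires $v_j \le v$, which holds by self-selection; the match branch gives utility $v+R_j\ge 0$ because $R_j\ge 0$; the no-item branch gives utility $0$. Second, BIC is checked against every deviation (lie down to each smaller bucket and lie up), using the rebate to compensate the buyer for forgoing the cheap option and using the no-item branch to kill the payoff of a lie-up. Third, the expected revenue per truthful buyer in bucket $j$ is a combination of the $(1-\eps)v_j$ received in the $\sig=*$ branch minus the expected rebate $\eps R_j$; summing against the ER weights and invoking the telescoping identity gives total revenue $\Omega((1-\eps) \ln \ln H)$.

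The main obstacle is the delicate balance between the rebate magnitude (needed to satisfy BIC) and the $(1-\eps)v_j$ revenue collected under $\sig=*$. Because ex-post IR caps any direct monetary penalty on a lying buyer by her reported value, the disincentive to lie has to be carried by the allocation-withholding branch; this makes the lying utility $(1-\eps)(v - v_j)$ rather than $v - v_j$, and it is precisely the factor $(1-\eps)$ appearing here that eventually surfaces as the $(1-\eps)$ prefactor in the revenue bound. Getting the "lie to the cheapest bucket" IC to produce a rebate whose expected cost $\eps R_j$ leaves an $\Omega(1-\eps)$-fraction of the bucket revenue is the technical heart of the construction, and I expect this is where most of the work goes; edge effects at the smallest bucket (where the rebate formula degenerates) and at the top point-mass $v=H$ are handled separately and contribute only lower-order terms.
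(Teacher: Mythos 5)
Your construction has a genuine gap: the deterministic allocation under $\sig = *$ makes lying to the cheapest bucket far too attractive, and the rebate needed to restore incentive compatibility consumes essentially all the revenue. Concretely, a truthful buyer in bucket $j$ has interim utility $(1-\eps)(v - v_j) + \eps(v + R_j)$, while lying to bucket $0$ (and getting the item whenever $\sig=*$) yields $(1-\eps)(v - v_0)$. The binding IC constraint (at $v = v_j$) forces $R_j \geq \frac{(1-\eps)(v_j - v_0)}{\eps} - v_j$, and plugging the smallest such $R_j$ into the per-bucket revenue gives
\[
(1-\eps)\,v_j - \eps R_j = (1-\eps)\,v_0 + \eps\, v_j.
\]
Summing against the equal-revenue weights $\Pr[v\in B_j] \approx 1/v_j$ yields total revenue $(1-\eps)v_0 \sum_j 1/v_j + \eps\sum_j \Pr[v\in B_j]\,v_j = O(1) + \Theta(\eps\ln\ln H)$, which is $O(1)$ in the regime $\eps\to 0$, not $\Omega((1-\eps)\ln\ln H)$. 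The $(1-\eps)$ prefactor you expected to survive instead cancels against the rebate cost; the telescoping identity only enters as the quantity an impossible mechanism would extract, not one this mechanism achieves.

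The missing idea is a \emph{fractional, value-dependent allocation} in the $\sig=*$ branch. The paper sets $\alloc(v,*) = \frac{\ln v}{\ln H}$ and $\payment(v,*) = v\cdot\alloc(v,*)$, so truthful utility under $\sig=*$ is exactly zero and the best deviation utility is $g(v) = \max_w \frac{\ln w}{\ln H}(v-w)$ --- a deviation to a low $w$ now wins the item only with probability $\frac{\ln w}{\ln H}$, so it is no longer a near-free lunch. Crucially $g(v) \leq \frac{v(\ln v - \ln\ln v + 1)}{\ln H}$, strictly below the truthful payment $\frac{v\ln v}{\ln H}$, and the gap $\frac{v(\ln\ln v - 1)}{\ln H}$ is what integrates to $(1-\eps)(\ln\ln H - 2)$. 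With your constant allocation $\alloc(\cdot,*)\equiv 1$ the analogous gap is $\max_w (v-w)$ versus the revenue $v_j$, and there is no slack of the right order. No choice of bucket prices or per-value rebates fixes this as long as the $\sig=*$ branch always hands over the item; the randomization in the allocation rule is the structural ingredient that keeps $g(v)$ strictly below the full revenue and is what the construction actually hinges on.
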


\begin{proof}
	We first describe the mechanism.  Recall that we use $\alloc$ and $\payment$ to denote allocation and payment rules.  Let $\alloc(\val, *)$ be $\tfrac {\ln \val}{\ln H}$, and $\payment(\val, *) = \val \alloc(\val, *) = \tfrac{\val \ln \val}{\ln H}$.
	Note that, when telling the truth, the buyer realizes zero utility under the uninformative signal $\sig = *$.  All incentives for telling the truth comes from when $\sig$ is informative and is equal to some value~$w$: whenever the reported $v \neq w$, both $\alloc(\val, w)$ and $\payment(\val, w)$ are~$0$;  otherwise, the buyer gets paid by the seller; that is, $\payment(\val, \val)$ is negative. The $\payment(\val, \val)$'s are carefully chosen to satisfy the \emph{IC} constraints: 
	let $g(z)$ be $\max_{y} \frac{\ln y}{\ln H}(z - y)$; $p_{v,v}$ is then set to be $-\frac{f(\val,*)}{f(\val,\val)} g(\val)$, where $f$ is the density function of the equal revenue distribution.


	
	
	We relegate the proofs of the following two claims to \autoref{app:examples}.
	
	\begin{claim}
		\label{cl:example-ICIR}
		The mechanism defined above is IC and ex post IR.
	\end{claim}
	
	\begin{claim}
		\label{cl:tech_g}
		For all $z \ge 1$, $g(z) \le \frac{z \ln z}{\ln H}$. If $z \ge e, g(z) \le \frac{\ln z - \ln \ln z + 1}{\ln H} \cdot z$.
	\end{claim}
	
	We now calculate the expected revenue of the mechanism.  The key is to show that the payment lost to the buyer under informative signals is offset far more by the revenue gained under the uninformative signal $\sig = *$:
	\begin{align*}
		\Rev = & \int_{1}^{H} f(v, *) \payment(\val, *) \: \dd v+ \int_{1}^{H} f(v, v) \payment(\val, \val) \: \dd v\\
		= & \int_{1}^{H} f(v,  *) (\payment(\val, *) - g(v)) \ge \int_{e}^{H} f(v,  *) \left(\frac{\ln \val}{\ln H} \cdot \val - \frac{\ln \val - \ln \ln \val + 1}{\ln H} \cdot \val \right) \: \dd v\\
	\ge & (1 - \eps) \int_{e}^{H} \frac{1}{\val} \cdot \frac{\ln \ln \val - 1}{\ln H} \: \dd v \\
	= & (1 - \eps) \cdot \frac{1}{\ln H} \cdot \left(\ln \val \ln \ln \val - 2\ln \val\right)\bigg\rvert_{e}^{H} \ge (1 - \eps) \cdot (\ln \ln H - 2),
	\end{align*}
	where the first inequality follows from \autoref{cl:tech_g}.
	
\end{proof}

Our second example shows that, in a mechanism that does not make negative payments, private signals can still blow up the revenue by an arbitrarily large factor if the supports of the value distributions under different signals are allowed to interleave in arbitrary ways, a situation that would be precluded by joint regularity.

\begin{example}
	\label{ex:weird-dist}
	Fix a large integer $\highval$.  Signal $\sig$ is uniformly distributed on $\{0, 1\}^{\highval}$.  Conditioning on a realization of~$\sig$, let the support of values, $\typespace_\sig$, be $\{3k + \sig_k : k = 1, 2, \cdots, \highval\}$, and the value conditioning on $\sig$ is drawn from the discrete equal revenue distribution supported on $\typespace_\sig$, i.e., 
\begin{align*}
	\Prx{\val = \typei | \sig} = f(t_i|\sig) \coloneqq
\begin{cases}
\frac{1}{t_i} - \frac{1}{t_{i+1}}, & i<k \\
\frac{1}{t_k}, & i=k,
\end{cases}
\end{align*}
where $t_1 < t_2 < \ldots < t_{\highval}$ are the elements of $\typespace_{\sig}$.  $f(\val | \sig)$ is $0$ for $\val \notin \typespace_\sig$.
\end{example}

The optimal revenue under public signals is easily seen to be~$1$.  Under private signals, we argue that the following mechanism is IC and extracts a revenue of $\Omega(\ln \highval)$:  Let $x(v, \sig)$ be $1$ if $f(v|\sig) > 0$, and $0$ otherwise; let $p(v, \sig)$ be $\tfrac{1}{3}v$ if $f(v|\sig) > 0$, and $0$ otherwise.

For any value~$v$ that the buyer has, telling the truth gets her utility $\mathbb{E}_{\sig} [x(v,\sig) \cdot v - p(v,\sig)] = \frac{2}{3}v$.  By reporting any other value, the probability that her report would agree with the corresponding coordinate of~$\sig$ is only $\tfrac 1 2$, which means the probability she wins the item would be at most~$\tfrac 1 2$, and her utility under deviation cannot be more than $\tfrac 1 2 v$.  This shows that the mechanism is IC.  Its revenue is clearly a third of the buyer's expected value, which is $\Omega(\ln \highval)$.




\begin{remark}
	In \autoref{ex:weird-dist}, the distribution is not jointly regular, despite its conditional value distribution resembling discretized equal revenue distributions.  Once one tries to ``pad'' the gaps in the supports to make the density nonzero everywhere, the power of the private signals disappears.  This remark is not meant to contrast discrete and continuous distributions; the point is that certain operations that are taken for granted in the absence of signals cannot be performed in correlated settings, due to the interaction among different signals.
\end{remark}

	\subsection{Proof of \autoref{thm:bounded_bsic}: Bounded Power of Private Signals under Regualrity and Non-negative Payments}
\label{sec:single-proof}

We now prove \autoref{thm:bounded_bsic}.  
The optimal revenue by a mechanism with private signal and nonnegative payments is returned by the following optimzation problem:
\begin{equation}
\label{eq:bic}
\begin{aligned}
	\max : & \int_s \int_t f(t,s)p(t,s) \: \dd t\:\dd s \\
	\text{subject to:} & \int_s f(t,s) (t\cdot x(t,s) - p(t,s)) \: \dd s \geq \int_s f(t,s)(t\cdot x(t',s) - p(t',s)) \: \dd s & \forall t,t' &&\cdots \lambda(t, t') \\
	& t\cdot x(t,s) - p(t,s) \geq 0 &\forall t,s && \cdots \mu(t,s) \\
   & p(t,s) \geq 0 & \forall t,s
\end{aligned}
\end{equation}
We will use $\Rev(x,p)$ to denote the objective value of \eqref{eq:bic} with allocation rule $x$ and payment rule $p$ and $\cF$ to denote the set of feasible $(x,p)$ pair.  Then the value of \eqref{eq:bic} is $\sup_{(x,p) \in \cF} \Rev(x,p)$.

To prove \autoref{thm:bounded_bsic}, we will attempt to bound a partial Lagrangian relaxation of~\eqref{eq:bic}.
To that end, let $\cL(x, p, \lambda, \mu)$ be the Lagrangian relaxation defined as follows.
\begin{equation}
\label{eq:lagrangian1}
\begin{aligned}
   \cL(x,p,\lambda,\mu) \coloneqq & \int_s \int_t f(t,s)p(t,s) \: \dd t \:\dd s \\
   + & \int_{t} \int_{t'} \lambda(t,t') \left( \int_{s} f(t,s) \big( t\cdot (x(t,s) - x(t',s)) - (p(t,s) - p(t',s)) \big) \: \dd s \right) \dd t' \: \dd t \\
   + & \int_{s} \int_{t} \mu(t,s) f(t,s) (t\cdot x(t,s) - p(t,s)) \: \dd t \: \dd s.
\end{aligned}
\end{equation}
Without loss of generality, we will assume $x(t,s) = p(t,s) = 0$ whenever $f(t,s) = 0$ as the partial Lagrangian remains unchanged.
   Apply Fact~\ref{fact:lagrangian}, we easily see:

\begin{lemma}
   $\sup_{(x,p) \in \cF} \Rev(x,p) \leq \inf_{\lambda,\mu \geq 0} \sup_{(x,p) \in \cF} \cL(x,p,\lambda,\mu)$.
\end{lemma}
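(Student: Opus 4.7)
The plan is to obtain the bound as an immediate application of Fact~\ref{fact:lagrangian} (weak Lagrangian duality). First I would match the abstract template of~\eqref{eq:lr} to the single-buyer program~\eqref{eq:bic}: the decision variable is the pair of rules $(x,p)$; the objective $f$ is $\Rev(x,p)=\int_s\int_t f(t,s)p(t,s)\,\dd t\,\dd s$; the constraints to be Lagrangified are the IC family (indexed by $(t,t')$, with multipliers $\lambda(t,t')\ge 0$) and the ex post IR family (indexed by $(t,s)$, with multipliers $\mu(t,s)\ge 0$); and the ``kept'' set $\cC$ consists of all $(x,p)$ with $p(t,s)\ge 0$ (together with the implicit feasibility $x(t,s)\in[0,1]$). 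Under this identification, the set $\cC'$ of Fact~\ref{fact:lagrangian} coincides with $\cF$, and it is nonempty since the zero mechanism $x\equiv 0$, $p\equiv 0$ lies in it.

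Next I would verify that \eqref{eq:lagrangian1} is literally $\Rev(x,p)$ plus $\lambda$ and $\mu$ paired against the slacks of the constraints written in the ``$\ge 0$'' form of~\eqref{eq:bic}. Reading line by line: the IC term integrates $\lambda(t,t')$ against $\int_s f(t,s)\bigl[t(x(t,s)-x(t',s))-(p(t,s)-p(t',s))\bigr]\,\dd s$, which is exactly the slack of the $(t,t')$-IC inequality; the IR term integrates $\mu(t,s)$ against $f(t,s)(t\,x(t,s)-p(t,s))$, the slack of the $(t,s)$-IR inequality. Hence for every $(x,p)\in\cF$ and every $(\lambda,\mu)\ge 0$ one has $\cL(x,p,\lambda,\mu)\ge\Rev(x,p)$, which is precisely the pointwise inequality underlying Fact~\ref{fact:lagrangian}.

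I would then conclude in one line: for any fixed $(\lambda,\mu)\ge 0$, Fact~\ref{fact:lagrangian} yields $\sup_{(x,p)\in\cF}\Rev(x,p)\le\sup_{(x,p)\in\cF}\cL(x,p,\lambda,\mu)$; since the left-hand side is independent of the multipliers, taking the infimum over $(\lambda,\mu)\ge 0$ on the right preserves the inequality and gives the claim. I do not anticipate any real obstacle here: the lemma is bookkeeping weak duality for the linear program in~\eqref{eq:bic}, and the substantive work—choosing multipliers that make the upper bound small enough to prove \autoref{thm:bounded_bsic}—is deferred to the remainder of \autoref{sec:single-proof}. The only mild points to note are that Fact~\ref{fact:lagrangian} takes the supremum over $\cC'=\cF$ (matching the statement, so no enlargement of the domain is needed) and that one tacitly assumes enough regularity of $x,p,\lambda,\mu$ as measurable functions to apply Fubini where the derivations swap the order of integration, which is standard throughout this line of work.
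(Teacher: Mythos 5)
Your proposal is correct and follows exactly the paper's route: the paper proves this lemma by a one-line invocation of Fact~\ref{fact:lagrangian} (weak Lagrangian duality applied to~\eqref{eq:bic}), and your write-up simply fills in the bookkeeping details of that invocation — matching the objective, the Lagrangified IC/IR constraints, and the kept set $\cC$, then noting that $\cC'=\cF$ and taking the infimum over $(\lambda,\mu)\ge 0$. Nothing to flag.
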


Using Fubini and some rearrangement of \eqref{eq:lagrangian1} gives that, whenever $(x,p)$ is feasible,
\begin{align*}
   \cL(x,p,\lambda,\mu)
   = & \int_{s} \int_{t} p(t,s) \left( f(t,s) - \int_{t'} f(t,s) \lambda(t, t') \dd t' + \int_{t'} f(t',s) \lambda(t',t) \dd t' - f(t,s) \mu(t,s) \right) \dd t \: \dd s \\
   + & \int_{s} \int_{t} x(t,s) \left( tf(t,s) \mu(t,s) + \int_{t'} t f(t,s) \lambda(t, t') \: \dd t' - \int_{t'} t'f(t',s) \lambda(t',t) \: \dd t' \right) \dd t \: \dd s.
\end{align*}
Define
\[
   \lambda^*(t,t') =
   \begin{cases}
      \frac{2}{t}, & t > 0, t' \leq t \\
      0, & \text{otherwise}.
   \end{cases}
\]
Define 
\begin{align*}
g(t,s) \coloneqq f(t,s) - \int_{t'} f(t,s) \lambda^*(t,t') \: \dd t' + \int_{t'} f(t',s) \lambda^*(t', t) \: \dd t,
\end{align*} 
and let $\mu^*(t,s)$ be such that $f(t, s) \mu^*(t, s) = [g(t,s)]_+$, where $[y]_+$ denotes $\max \{0, y\}$.
  Then by the constraint $p(t,s) \geq 0$ for all $t,s$, we have
\[
   \int_{t} p(t,s) \left( f(t,s) - \int_{t'} f(t,s) \lambda(t, t') \dd t' + \int_{t'} f(t',s) \lambda(t',t) \dd t' - f(t,s) \mu^*(t,s) \right) \dd t \leq 0.
\]

Define 
\begin{align*}
	h(t,s) \coloneqq \int_{t'} tf(t,s) \lambda^*(t,t') \: \dd t' - \int_{t'} t'f(t',s) \lambda^*(t',t) \: \dd t'.
\end{align*} 
We can then bound $\cL(x, p, \lambda^*, \mu^*)$ by
\begin{align*}
   \cL(x, p, \lambda^*, \mu^*) \leq &
	\int_{s} \int_{t} x(t,s) \left( tf(t,s) \mu^*(t,s) + \int_{t'} t f(t,s) \lambda^*(t, t') \: \dd t' - \int_{t'} t'f(t',s) \lambda^*(t',t) \: \dd t' \right) \dd t \: \dd s \\
   = &
	\int_{s} \int_{t} x(t,s) \left( [tg(t,s)]_+ + h(t,s) \right) \dd t \: \dd s.
\end{align*}
\autoref{app:mix} contains the proof of the following lemma and other missing proofs of this section:
\begin{lemma}
   \label{lem:bounded1}
   The revenue of any mechanism with private signals and nonnegative prices is upper bounded by
   \begin{equation}
      \label{eq:lag2}
      \int_{s} \int_{t}\left( \left[tg(t,s) + \frac{1}{2} h(t,s) \right]_+ + [h(t,s)]_+ \right) \dd t \: \dd s.
   \end{equation}
\end{lemma}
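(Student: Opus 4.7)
The plan is to start from the inequality
\[
\cL(x,p,\lambda^*,\mu^*) \le \int_{s}\int_{t} x(t,s)\bigl([tg(t,s)]_+ + h(t,s)\bigr)\,\dd t\,\dd s
\]
derived immediately before the lemma, and to pass to the claimed bound via a short pointwise inequality together with the fact that any allocation rule satisfies $x(t,s) \in [0,1]$. For any feasible $(x,p) \in \cF$, \autoref{fact:lagrangian} (applied with dual variables $\lambda^*, \mu^* \geq 0$) gives $\Rev(x,p) \le \cL(x,p,\lambda^*,\mu^*)$, so it is enough to establish the stated bound on the Lagrangian.

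Bounding the integrand pointwise, since $x(t,s) \in [0,1]$ we have
\[
x(t,s)\bigl([tg(t,s)]_+ + h(t,s)\bigr) \le \max\bigl(0,\, [tg(t,s)]_+ + h(t,s)\bigr) = \max\bigl(0,\, tg(t,s) + h(t,s),\, h(t,s)\bigr),
\]
using the identity $[A]_+ + B = \max(A+B, B)$. Thus the lemma reduces to the pointwise inequality
\[
\max(0,\, A+B,\, B) \le [A + B/2]_+ + [B]_+ \qquad (A,B \in \R),
\]
applied with $A = tg(t,s)$ and $B = h(t,s)$ and integrated over $(t,s)$.

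I would verify this pointwise inequality by splitting on the sign of $B$. If $B \ge 0$, the left side equals $B + [A]_+$ and the right side equals $B + [A + B/2]_+$; since $B/2 \ge 0$, $[A+B/2]_+ \ge [A]_+$. If $B < 0$, the left side collapses to $[A+B]_+$ and the right side to $[A + B/2]_+$; since $B/2 \ge B$, $[A + B/2]_+ \ge [A+B]_+$. Either way the inequality holds.

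The only real content is the splitting $h = h/2 + h/2$, putting one copy of $h/2$ together with $tg$ inside the positive part and leaving the other copy as a separate $[h]_+$ term; finding the right split is the expected obstacle, and it is what allows $x \le 1$ to be used losslessly. Note that no regularity is invoked at this stage—the bound holds in complete generality—which is consistent with the fact that the regularity hypothesis of \autoref{thm:bounded_bsic} must be used later to compare the benchmark $\int\int \bigl([tg+h/2]_+ + [h]_+\bigr)\,\dd t\,\dd s$ against the na\"ive-pricing revenue.
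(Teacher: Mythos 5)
Your proof is correct and follows essentially the same route as the paper: both first use nonnegative payments and $x \in [0,1]$ to reduce to bounding $\int_s\int_t\bigl[[tg]_+ + h\bigr]_+$, and both then split $h = h/2 + h/2$ to obtain the two positive-part terms. The only cosmetic difference is that you case on the sign of $h$ (via the clean identity $[A]_+ + B = \max(A+B,B)$) while the paper cases on the sign of $g$; the underlying idea and the resulting bound are identical.
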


   Our choice of $\lambda$ is independent of the signal and so we bound the inner integral of \eqref{eq:lag2} for each~$\sig$.  Let $\DRev(s)$ denote the contribution to the optimal revenue under public signals under signal~$s$.  The following lemma is the only place where we need the regularity assumption:

\begin{lemma}
   \label{lem:bounded2}
   If the distribution is jointly regular, then for all signals $s$, 
   \begin{align*}
	   \int_{t} [h(t, s)]_+ \: \dd t & \leq 2 \cdot \DRev(s), \\
	   \int_{t} \left[ t \cdot g(t, s) + \frac{1}{2} h(t, s) \right]_+ \: \dd t & \leq \DRev(s).
   \end{align*}
\end{lemma}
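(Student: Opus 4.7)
The plan is to exploit the choice of $\lambda^*$ to reduce the expressions $h(t,s)$ and $tg(t,s)+h(t,s)/2$ to integrals involving Myerson's virtual value $\virt_s$ of the value distribution conditional on $s$. Evaluating the inner integrals against $\lambda^*$ (over types in $[0,\infty)$) gives $\int_{t'}\lambda^*(t,t')\,\dd t' = 2$ and $\int_{t'}f(t',s)\lambda^*(t',t)\,\dd t' = 2\int_t^\infty f(t',s)/t'\,\dd t'$; substituting into the definition of $h$ and abbreviating $\bar F_s(t) := \int_t^\infty f(t',s)\,\dd t'$, one obtains
\[
h(t,s) = 2tf(t,s) - 2\bar F_s(t) = 2\virt_s(t)f(t,s).
\]
The first inequality is then immediate from joint regularity and \autoref{lem:Myerson} applied conditionally on $s$: $\int_t [h(t,s)]_+\,\dd t = 2\int_t [\virt_s(t)]_+ f(t,s)\,\dd t = 2\DRev(s)$.

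The second inequality is where the work lies. Setting $A(t) := tg(t,s) + h(t,s)/2$ and writing $J(t) := \int_t^\infty f(t',s)/t'\,\dd t'$, direct substitution gives $A(t) = 2tJ(t) - \bar F_s(t)$, which bears no obvious resemblance to a virtual-value expression. The main technical step --- and the main obstacle --- is to derive, via integration by parts applied to $\int_t^\infty \bar F_s(t')/(t')^2\,\dd t'$, the identity
\[
A(t) \;=\; t\int_t^\infty \frac{\virt_s(t')\,f(t',s)}{(t')^2}\,\dd t'.
\]
Equivalently, setting $R(t):=t\bar F_s(t)$ for the revenue curve, $A(t) = -t\int_t^\infty R'(t')/(t')^2\,\dd t'$ since $R'(t) = -\virt_s(t)f(t,s)$. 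This reformulation recasts $A(t)$ as a $1/(t')^2$-weighted integral of virtual values, which is precisely the form needed to exploit regularity.

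Given the identity, the conclusion is a single order-swap. Since $\virt_s(t')f(t',s) \le [\virt_s(t')]_+ f(t',s)$ pointwise and the right-hand side of the identity is then nonnegative, $[A(t)]_+ \le t\int_t^\infty [\virt_s(t')]_+ f(t',s)/(t')^2\,\dd t'$. Swapping the order of integration,
\[
\int_0^\infty[A(t)]_+\,\dd t \;\le\; \int_0^\infty \frac{[\virt_s(t')]_+f(t',s)}{(t')^2}\int_0^{t'} t\,\dd t\,\dd t' \;=\; \tfrac{1}{2}\int_0^\infty [\virt_s(t')]_+ f(t',s)\,\dd t' \;=\; \tfrac{1}{2}\DRev(s),
\]
which is at most $\DRev(s)$ as required.
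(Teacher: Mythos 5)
Your bound on $\int_t [h(t,s)]_+ \, \dd t$ matches the paper's argument exactly (both reduce $h(t,s)$ to $2f(t,s)\virt_s(t)$ and invoke Myerson).

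For the second bound, your route is genuinely different from the paper's, and in fact yields a strictly sharper constant. The paper writes $t g(t,s) + \tfrac12 h(t,s) = t\,\psi_s(t)$, computes $\psi_s'(t) = -\tfrac{f(t,s)}{t^2}\varphi(t\mid s)$, and invokes regularity at the start to conclude that $\{t : \psi_s(t) \geq 0\}$ is a single interval $[l,u]$; it then evaluates $\int_{t\le r} t\psi_s(t)\,\dd t = \int_{t'>r}\bigl(\tfrac{r^2}{t'}-r\bigr)f(t',s)\,\dd t'$ explicitly, showing this is $\le 0$ for all $r$ and bounded below by $-\DRev(s)$, and subtracts the value at $r=l$ from the value at $r=u$. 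You instead integrate by parts to obtain the identity
\[
   t\,g(t,s) + \tfrac12 h(t,s) \;=\; t \int_{t' \geq t} \frac{\virt_s(t') f(t',s)}{(t')^2} \, \dd t'
\]
(the boundary term $\bar F_s(T)/T$ vanishes since $\bar F_s \le f(s)$ is bounded), replace $\virt_s$ by $[\virt_s]_+$ under the integral to kill the $[\,\cdot\,]_+$ on the outside, and Fubini. Regularity enters only at the very end, to identify $\int_t [\virt_s(t)]_+ f(t,s)\,\dd t$ with $\DRev(s)$ --- the same place it enters for the $h$ bound --- rather than upfront to establish the interval structure of $\{t : \psi_s(t) \ge 0\}$. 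This is cleaner, and it gives $\int_t \bigl[t g(t,s) + \tfrac12 h(t,s)\bigr]_+ \dd t \le \tfrac12 \DRev(s)$ rather than $\DRev(s)$; propagated through the rest of the argument, this improves the constant in \autoref{thm:bounded_bsic} from $3$ to $\tfrac52$ (and the mixture bound from $3k$ to $\tfrac52 k$), and the lookahead constant in \autoref{thm:lookahead-nonneg} from $5$ to $\tfrac92$. One small addition worth making explicit in a final write-up: justify the vanishing boundary term in the integration by parts as above, and note that the Fubini swap is legitimate by Tonelli since the integrand after replacing $\virt_s$ by $[\virt_s]_+$ is nonnegative.
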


\begin{proof} 
   By definition, we have that $t\cdot g(t,s) = 2 \int_{t' \geq t} \frac{t f(t',s)}{t'} \: \dd t' - tf(t,s)$ and
   \begin{align*}
	   h(t,s) = 2tf(t,s) - 2 \int_{t' \geq t} f(t',s) \: \dd t' = 2f(s) \left[ tf(t | s) - 2[1 - F(t | s) \right] = 2f(s, t) \varphi(t | s).
   \end{align*}
   where $f(s) \coloneqq \int_{t'} f(t', s) \: \dd t'$ is the marginal density of~$s$, and $f(t | s)$, $F(t | s)$ and $\varphi(t | s)$ are the conditional density, conditional cumulative density and conditional virtual value of $t$ given~$s$, respectively.  By Myerson's lemma (\autoref{lem:Myerson}), $\int_t [h(t, s)]_+ \: \dd t$ is bounded by $2\DRev(s)$.


   For the second part of the statement, we have
   \[
	   \int_{t} \left[ t \cdot g(t,s) + \frac{1}{2}h(t,s), \right]_+ \: \dd t
      = \int_{t} t \cdot \left[ 2 \int_{t' \geq t} \frac{f(t',s)}{t'} \: \dd t' - \int_{t' \geq t} \frac{f(t',s)}{t} \: \dd t' \right]_+ \: \dd t.
   \]
   We first show that there exists $0 \leq l \leq u$ such that the integrand is positive only on an interval $[l, u]$.  Let $\psi_s(t)$ be  $2 \int_{t' \geq t} \frac{f(t',s)}{t'} \: \dd t' - \int_{t' \geq t} \frac{f(t',s)}{t} \: \dd t'$.  Then
   \begin{align*}
      \frac{\dd \psi_s}{\dd t}(t) = & -2\frac{f(t,s)}{t} + \int_{t' \geq t} \frac{f(t',s)}{t^2} \: \dd t' + \frac{f(t,s)}{t} \\
      = & -\frac{f(t,s)}{t^2} \left( t - \frac{1 - F(t|s)}{f(t|s)} \right) \\
      = & -\frac{f(t,s)}{t^2} \varphi(t|s).
   \end{align*}
   By joint regularity, $\varphi(t|s)$ is nondecreasing so $\psi_s(t)$ has at most two roots.  In particular, there exists $0 \leq l \leq u$ such that $\psi_s(t) \geq 0$ if and only if $l \leq t \leq u$.
   For any $r \in \R_{\geq 0}$ we have 
   \begin{align*}
      \int_{t \leq r} t \cdot \psi_s(t) \: \dd t = & \int_{t \leq r} 2 \int_{t' \geq t} \frac{t f(t',s)}{t'} \: \dd t' \: \dd t - \int_{t \leq r} \int_{t' \geq t} f(t',s) \: \dd t' \: \dd t \\
      = & \int_{t'} \int_{t \leq \min\{r, t'\}} \frac{2t f(t',s)}{t'} \: \dd t \: \dd t' - \int_{t'} \int_{t \leq \min\{r,t\}} f(t',s) \: \dd t \: \dd t' \\
      = & \int_{t' \leq r} \left( \int_{t \leq t'} \frac{2tf(t',s)}{t'} \: \dd t - t'f(t',s) \right) \dd t'
      + \int_{t' > r} \left(  \int_{t \leq r} \frac{2tf(t',s)}{t'} \dd t - rf(t',s) \right) \dd t' \\
      = & \int_{t' > r} \left( \frac{r^2}{t'} - r \right) f(t',s) \: \dd t' \leq0.
   \end{align*}
   On the other hand, for any $r \in \mathbb R_{\geq 0}$,
   \begin{align*}
	-\int_{t \leq r} t \cdot \psi_s(t) \: \dd t = - \int_{t' > r} \left( \frac{r^2}{t'} - r \right) f(t',s) \: \dd t'
	\leq r \int_{t' \geq r} f(t',s) \: \dd t' \leq \DRev(s).
\end{align*}

Thus, $\int_{t} [t\cdot g(t,s) + \frac{1}{2} h(t,s)]_+ \: \dd t = \int_{l \leq t \leq u} t \cdot \psi_s(t) \: \dd t = \int_{t \leq u} t \cdot \psi_s(t) \: \dd t - \int_{t \leq l} t \cdot \psi_s(t) \: \dd t \leq \DRev(s)$.
\end{proof}

Putting together what we have showed: by \autoref{lem:bounded1}, the revenue of any mechanism with private signal and nonnegative payments is bounded above by \eqref{eq:lag2}.
   Assuming joint regularity and applying \autoref{lem:bounded2},
   we have that the inner integral of~\eqref{eq:lag2} is bounded above by $3 \DRev(s)$.
   Hence, the revenue of any BSIC mechanism is bounded above by $3 \int_s \DRev(s) \: \dd s = 3\DRev$.

   When the distribution is a mixture of $k$ jointly regular distributions, all arguments remain true except that the bounds in \autoref{lem:bounded2} degrade by a factor of~$k$.  We relegate the details to \autoref{app:mix}.

\section{Multi-bidder Auctions}
\label{sec:lookahead}

In this section we consider the revenues of DSIC and BIC multi-bidder auctions with correlated values.  Our examples in \autoref{sec:examples} can be easily translated to multi-bidder distributions, showing:

\begin{corollary}
	\label{cor:lookahead-examples}
In multi-bidder auctions with correlated values, if the auctioneer is allowed to make payments to the bidders, or if the value distribution is not jointly regular, the revenue of the optimal BIC mechanism can be more than that of any DSIC mechanism by an arbitrarily large factor.
\end{corollary}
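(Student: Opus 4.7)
The plan is to translate each of \autoref{ex:neg-payment} and \autoref{ex:weird-dist} into a two-bidder auction by introducing a ``signal-encoding'' second bidder whose value is uniformly tiny compared with bidder~1's. Given a single-buyer instance with value $\val$ and signal $\sig$, I would define the joint distribution on $(t_1, t_2)$ by $t_1 = \val$ and $t_2 = \delta \cdot \phi(\sig)$, where $\phi$ injects the signal space into $[0,1)$ and $\delta > 0$ is small enough that $t_2 < t_1$ almost surely. Concretely, for \autoref{ex:neg-payment} one may take $\phi(*) = 0$, $\phi(\val) = \val / (2H^2)$, and $\delta = 1$, letting $\eps$ tend to $0$ as $H \to \infty$; for \autoref{ex:weird-dist}, take $\phi(\sig) = \sum_{k=1}^{\highval} \sig_k 2^{-k}$ and any $\delta < 3$.

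For the BIC lower bound, the plan is to reuse the single-buyer private-signal mechanism from the corresponding example verbatim on bidder~1, with $\sig$ replaced by $\phi^{-1}(t_2/\delta)$, and to set bidder~2's allocation and payment identically to zero. Bidder~2's IC and ex post IR are then trivial; bidder~1's BIC constraint reduces, via the bijection $\phi$ and the fact that $t_2$ deterministically encodes $\sig$, to the single-buyer private-signal IC already verified in those examples. The resulting BIC revenue therefore matches the single-buyer private-signal revenue, namely $\Omega((1-\eps)\ln\ln H)$ under the first construction and $\Omega(\ln \highval)$ under the second.

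For the DSIC upper bound, I would invoke \autoref{lem:Myerson} conditional on each report of bidder~2. In a DSIC, ex post IR mechanism, fixing $t_2$ renders the restriction to bidder~1 a standard single-buyer IC mechanism for the conditional distribution of $t_1$ given $t_2$, whose expected revenue is thus at most the optimal posted-price revenue for the distribution of $\val$ given $\sig = \phi^{-1}(t_2/\delta)$. Integrating over $t_2$ recovers exactly the na\"ive-pricing revenue --- $O(1 + \eps \ln H)$ and $O(1)$ in the two examples respectively --- while bidder~2's contribution is at most $\Ex{t_2} = O(\delta)$ by ex post IR on her side. Tuning $\eps = 1/(\ln H)^2$ in the first construction and taking $\delta \to 0$ in both, the DSIC revenue stays bounded while the BIC revenue diverges, yielding unbounded multiplicative gaps as $H, \highval \to \infty$.

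The main obstacle is confirming that DSIC cannot exploit bidder~2 to extract more from bidder~1 than the public-signal benchmark would allow. The crucial point is that under \emph{dominant-strategy} incentive compatibility, bidder~1's constraints must hold pointwise in $t_2$, which decouples the two-bidder problem into a family of single-buyer public-signal instances --- precisely what separates the DSIC benchmark from the BIC one and lets the unbounded single-buyer gaps transfer cleanly to the multi-bidder setting.
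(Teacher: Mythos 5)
Your proposal is correct and matches the paper's own proof in all essentials: it encodes the signal into a vanishingly small second bidder, reuses the single-buyer private-signal mechanism for the BIC lower bound, and uses the pointwise-in-$t_2$ decoupling of DSIC constraints together with ex post IR on bidder~2 for the DSIC upper bound. Your choice of injections $\phi$ is a cosmetic variant of the paper's ``number the signals by integers'' step, and in fact handles the uncountable signal space of \autoref{ex:neg-payment} slightly more gracefully than the paper's phrasing.
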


For completeness, we give a proof for the corollary in \autoref{app:lookahead}.  The main result of this section is that the bound we gave in \autoref{thm:bounded_bsic} for the single-buyer problem can be extended to multi-bidder auctions, with the loss of another constant factor.  It is tempting to use \citet{R01}'s lookahead auction (see \autoref{def:DSLA}) which reduces the multi-bidder problem to extracting revenue from the highest bidder.  However, \citeauthor{R01}'s proof does not directly apply to BIC mechanisms due to subtle reasons we discuss in \autoref{sec:conclusion}.  Instead, we are able to carry on a ``lookahead type of argument'' in the partial Lagrangian under no negative payment and regularity: almost coincidentally, the dual variables we constructed in the proof of \autoref{thm:bounded_bsic} have a property needed for such an argument (\autoref{cl:dual-bound}). 

\begin{theorem}
\label{thm:lookahead-nonneg}
In a multi-bidder auction with jointly regular distribution, the revenue of the lookahead auction is at least $\tfrac 1 5$ that of the optimal BIC mechanism not using negative payments.  Furthermore, if the the distribution is a mixture of $k$ jointly regular distributions, the bound is $\tfrac 1 {3k + 2}$.
\end{theorem}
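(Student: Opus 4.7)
The plan is to run the partial-Lagrangian argument of \autoref{thm:bounded_bsic} in parallel for all $n$ bidders, treating each bidder $i$'s opponents' types $\bm{t}_{-i}$ as playing the role of the ``signal'' from the single-buyer setting, and then to exploit the feasibility constraint $\sum_i x_i(\bm{t}) \leq 1$ to split the resulting Lagrangian into a ``highest bidder'' contribution, handled by the single-buyer bound, and a ``lower bidders'' contribution, handled by a second-price comparison. First I would formulate the optimal BIC, ex-post IR, non-negative-payment revenue as an LP analogous to~\eqref{eq:bic}, augmented with the feasibility constraint $\sum_i x_i(\bm{t}) \leq 1$. I would dualize each bidder's BIC and IR constraints with multipliers $\lambda_i(t_i, t_i')$ and $\mu_i(\bm{t})$, keeping non-negativity of payments and feasibility in the relaxed problem. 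For each $i$, I would reuse the single-buyer dual $\lambda_i^*(t_i, t_i') = 2/t_i$ when $t_i > 0$ and $t_i' \leq t_i$ (and zero otherwise), and choose $\mu_i^*$ to kill any positive coefficient on $p_i(\bm{t})$. As in the proof of \autoref{lem:bounded1}, the payment terms drop out and the Lagrangian reduces to the upper bound
\begin{align*}
   \sum_{i=1}^{n} \int_{\bm{t}} x_i(\bm{t}) \Bigl( \bigl[t_i\, g_i(t_i, \bm{t}_{-i}) + \tfrac{1}{2} h_i(t_i, \bm{t}_{-i})\bigr]_+ + \bigl[h_i(t_i, \bm{t}_{-i})\bigr]_+ \Bigr) \: \dd \bm{t},
\end{align*}
where $g_i$ and $h_i$ are exactly as in \autoref{sec:single-proof} but with $\bm{t}_{-i}$ playing the role of the signal $s$.

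Next I would decompose this sum by identifying, for each $\bm{t}$, the highest bidder $i^*(\bm{t}) = \arg\max_j t_j$ and writing $t_{(2)}$ for the second-highest component of $\bm{t}$. For the contribution of $i^*$, for each fixed $\bm{t}_{-i^*}$ the inner integrand is one-dimensional in $t_{i^*} \in [t_{(2)}, \infty)$. Because $F(\cdot \mid \bm{t}_{-i^*})$ is regular, the argument used to prove \autoref{lem:bounded2} applies verbatim to this truncated tail and bounds the tail integral by $3$ times the revenue of the optimal monopoly price on $F(\cdot \mid \bm{t}_{-i^*})$ restricted to $[t_{(2)}, \infty)$, which by \autoref{def:DSLA} is exactly the lookahead auction's expected revenue conditional on $\bm{t}_{-i^*}$ and on $i^*$ being the winner. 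Summing over the identity of $i^*$ and taking expectation over $\bm{t}_{-i^*}$ thus produces at most $3 \cdot \Rev(\text{Lookahead})$.

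For the lower bidders' contribution, the key ingredient is a pointwise bound, recorded as \autoref{cl:dual-bound}, on $C_i(t_i, \bm{t}_{-i}) \coloneqq [t_i g_i + \tfrac{1}{2} h_i]_+ + [h_i]_+$ in the range $t_i \leq \max_{j \neq i} t_j$. The bound is a direct consequence of the single-buyer dual's structure: $[h_i]_+ \leq 2 t_i f(\bm{t})$ follows from $\varphi_i(t_i \mid \bm{t}_{-i}) \leq t_i$, and an analogous estimate on $[t_i g_i + \tfrac{1}{2} h_i]_+$ (obtained using $t_i/t_i' \leq 1$ inside $\psi_{\bm{t}_{-i}}$) yields $C_i(t_i,\bm{t}_{-i}) \leq c \cdot t_i f(\bm{t})$ for a small absolute constant $c$. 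Combining this with feasibility $\sum_{i \neq i^*} x_i(\bm{t}) \leq 1$ and $t_i \leq t_{(2)}$ for lower bidders bounds the aggregate lower-bidders contribution by $2 \Ex{t_{(2)}}$. Since the second-price auction has revenue exactly $\Ex{t_{(2)}}$ and since the lookahead auction dominates it (the lookahead posts a price at least $t_{(2)}$ that is always accepted by the highest bidder), we have $\Ex{t_{(2)}} \leq \Rev(\text{Lookahead})$, producing another $2 \cdot \Rev(\text{Lookahead})$ and the claimed factor of $5$.

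The main obstacle, as I see it, is pinning down \autoref{cl:dual-bound} in a form sharp enough that the lower-bidders contribution telescopes cleanly to $2 \Ex{t_{(2)}}$ despite the arbitrary correlation between $t_i$ and $\bm{t}_{-i}$: the single-buyer dual $\lambda_i^*$ was not designed with multi-bidder feasibility in mind, and it is something of a fortunate coincidence that it is simultaneously compatible with the single-buyer tail bound of \autoref{lem:bounded2} and with the lookahead-style second-price argument for the lower bidders. For the mixture extension, the $\DRev(s)$ upper bound in the tail analysis degrades by a factor of $k$ exactly as in \autoref{thm:bounded_bsic}, so the highest bidder's contribution grows to $3k \cdot \Rev(\text{Lookahead})$, while the $2 \Ex{t_{(2)}}$ bound on the lower bidders is purely combinatorial and therefore unaffected by the number of mixture components, yielding the overall factor $3k+2$.
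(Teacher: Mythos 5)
Your high-level plan matches the paper exactly: use the same dual variables $\lambda_i^*(t_i, t_i') = 2/t_i$, partition the Lagrangian by the identity of the highest bidder, bound the lower bidders' contribution via a pointwise bound and the feasibility constraint to obtain $2\Ex{t_{(2)}} \leq 2\Rev(\text{Lookahead})$, and handle the highest bidder by rerunning the single-buyer tail argument on the regular conditional distribution restricted to $[t_{(2)}, \infty)$. However, there is a genuine gap in the lower-bidders step, and it stems from where you apply the \autoref{lem:bounded1}-style relaxation.

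You first apply the relaxation $[t_i g_i]_+ + h_i \leq [t_i g_i + \tfrac12 h_i]_+ + [h_i]_+$ to \emph{every} bidder, and then try to show the split quantity $C_i = [t_i g_i + \tfrac12 h_i]_+ + [h_i]_+$ is pointwise at most $c\, t_i f(\bm t)$ for the lower bidders. This pointwise bound does not hold for any constant $c$. To see why, note that $t_i g_i + \tfrac12 h_i = \int_{t' > t_i} f(t', \bm t_{-i})\bigl(\tfrac{2 t_i}{t'} - 1\bigr)\dd t' \leq f(\bm t_{-i})\bigl(F(2 t_i \mid \bm t_{-i}) - F(t_i \mid \bm t_{-i})\bigr)$, which is of the order of $f(\bm t_{-i})(1 - F(t_i \mid \bm t_{-i}))$ when mass is concentrated to the right of $t_i$. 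For any $t_i$ at which the conditional virtual value is negative, $1 - F(t_i \mid \bm t_{-i}) > t_i f(t_i \mid \bm t_{-i})$, and the ratio can be unboundedly large even for regular conditionals (e.g.\ $f(t'\mid \bm t_{-i}) \propto t'^{m}$ on $[0,1]$ with $m$ large, at $t_i = 1/2$). In that regime $h_i < 0$, so $[h_i]_+ = 0$, and $C_i$ is entirely driven by the $[t_i g_i + \tfrac12 h_i]_+$ term, which you cannot bound by $c\,t_i f(\bm t)$.

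The paper avoids this by reversing the order of operations: it partitions into highest versus lower bidders \emph{first}, and for the lower bidders it works with the unsplit quantity $[t_i g_i]_+ + h_i$. The crucial point is that when $h_i < 0$, the unsplit quantity can be much smaller than the split one (it retains the negative $h_i$ contribution), and the paper's \autoref{cl:dual-bound} shows $[t_i g_i]_+ + h_i \leq 2 t_i f(\bm t)$ pointwise by a short case analysis. The splitting into $[t_i g_i + \tfrac12 h_i]_+ + [h_i]_+$ is then applied \emph{only} to the highest bidder, where a pointwise bound is not required because the integral over $[t_{(2)}, \infty)$ can be controlled by the \autoref{lem:bounded2} argument. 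Your writeup actually flagged this as the main risk; it is indeed where the proof fails, and the fix is simply to delay the \autoref{lem:bounded1} relaxation until after the highest/lower decomposition and to use the unsplit form together with \autoref{cl:dual-bound} for the lower bidders. Once that is done, the rest of your argument (including the observation that lookahead dominates second price, and the $3k$ degradation for mixtures affecting only the highest-bidder term) is correct and matches the paper.
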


\begin{proof}
	Following an approach similar to \autoref{sec:single-proof}, we show the following lemma, whose proof and other missing proofs from this section can be found in \autoref{app:lookahead}.

	\begin{lemma}
		\label{lem:lookahead-lagrange}
		The revenue of the optimal BIC, ex post IR mechanism is at most 
		\begin{align*}
			\max_{\allocs \geq \mathbf{0} } & \int_{\types} f(\types) \sum_i \alloci(\types) \left\{ [g_i(\types) \typei]_+ + h_i(\types) \right\} \: \dd \types, \\
\text{s.t.} \quad & \sum_i \alloci(\types) \leq 1, \qquad \forall \types,
		\end{align*}
		where
		\begin{align*}
			g_i(\types) & =  - f(\types)  + 2\int_{\typei' > \typei} \frac{f(\typei', \typesmi)}{\typei'} \: \dd \typei',\\
			h_i(\types) & =  2f(\typesmi)\left[ \typei f(\typei | \typesmi) - (1 - F(\typei | \typesmi) \right].  
		\end{align*}
	\end{lemma}

\begin{claim}
	\label{cl:dual-bound}
	For any $\types$ and~$i$, $[g_i(\types)\typei]_+ + h_i(\types) \leq 2\typei f(\types)$.
\end{claim}

Let $U_i \subseteq \typespace$ denote the set of type profiles in which bidder~$i$ is considered the potential winner by the lookahead auction.  Recall that $\types \in U_i$ implies that $\typei$ is a highest type in~$\types$.  Let $\secmax(\types)$ denote the second highest type in vector~$\types$.  
We further partition $L(\mu^*, \lambda^*)$:
\begin{align}
	L(\mu^*, \lambda^*) &
   \leq \max_{\allocs \geq \mathbf{0}} \sum_i \left\{ \int_{\types \in U_i} \sum_{i'} \alloci[i'](\types) ([\typei[i'] g_{i'}(\types)]_+ + h_{i'}(\types)) \: \dd \types \right\} \nonumber \\
   & \leq \max_{\allocs \geq \mathbf{0}} \sum_i \left\{ \int_{\types \in U_i} \sum_{i' \neq i} \alloci[i'](\types) ([\typei[i'] g_{i'}(\types)]_+ + h_{i'}(\types)) \: \dd \types + \int_{\types \in U_i} \alloci(\types)( [\typei g_i(\types)]_+ + h_i(\types)) \: \dd \types \right\} \nonumber \\
       &\leq \sum_i \int_{\types \in U_i} 2 \secmax(\types) f(\types) \: \dd \types + \sum_i \int_{\types \in U_i} \alloci(\types) ([g_i(\types)]_+ + h_i(\types) ) \: \dd \types  \nonumber \\
       & = 2 \int_{\types} \secmax(\types) f(\types) \: \dd \types + \sum_i \int_{\types \in U_i} \alloci(\types) ([g_i(\types)]_+ + h_i(\types)) \: \dd \types \label{eq:lookahead-two-term} \\
	\text{s.t.} \quad & 0 \leq \alloci(\types) \leq 1, \qquad \forall \types \in U_i. \nonumber
\end{align}
where the inequality follows from \autoref{cl:dual-bound}, $\sum_{i'\neq i} \alloci[i'] \leq 1$ and $\typei[i'] \leq \secmax(\types)$ for any $\types \in U_i$.  The first term of \eqref{eq:lookahead-two-term} is twice the revenue of the second price auction.  With slight modification of the argument in the proof of \autoref{thm:bounded_bsic} (see \autoref{lem:lookahead-bounded2} in \autoref{app:lookahead}), one can show that the second term of \eqref{eq:lookahead-two-term} is upper bounded by $3k$ times the revenue of the lookahead auction when the distribution is a mixture of $k$ jointly regular distributions.  This completes the proof of \autoref{thm:lookahead-nonneg}.
\end{proof}

\section{Conclusions and Discussion}
\label{sec:conclusion}

In this work we give a fairly complete characterization, up to constant multiplicative factors, of revenue differences between disclosed and private side information when pricing for a single buyer, and of revenues differences between DSIC and BIC auctions for multiple bidders with correlated values.  We find that regularity in the value distribution and the absence of negative payments are necessary and sufficient for these differences to be relatively small.

In connecting the results for these two settings, we used an argument that is reminiscent of \citet{R01}'s proof for his lookahead auction (\autoref{def:DSLA}).  However, our proof relies on the two mentioned conditions and our particular choice of dual variables.  \citeauthor{R01}'s original proof cannot be straightforwardly carried over for BIC auctions, and we briefly explain the reason here.

One property trivially true for DSIC auctions is convenient for showing the approximate optimality of the lookahead auction: given an optimal DSIC auction, if one were to zero out all the allocations and payments for all bidders except for the highest, the remaining auction is still DSIC.  This is not true for BIC auctions. A bidder holding a certain value is sometimes the highest bidder, and sometimes not; her expected utility is affected by her allocations and payments when her value is not the highest. Removing these allocations and payments may invalidate the IC constraints.  Therefore, by restricting an auction to only allocating to the highest bidders, it is unclear whether it is able to extract as much revenue as an unrestricted auction from the highest bidders.  \footnote{This also calls to mind the comparison between $\beta$-exclusive and $\beta$-adjusted revenues considered by \citet{Yao15} for multi-item auctions with bidders with additive values.  It is also unclear whether \citeauthor{Yao15}'s argument can be applied here.}

 We therefore leave open the following interesting question: for multiple bidders with correlated values, can a BIC auction that allocates only to the highest bidders extract a constant fraction of the optimal BIC auction's revenue?

\appendix
\section{Proofs from \autoref{sec:examples}}
\label{app:examples}

%
%
%

\begin{proof}[Proof of \autoref{cl:example-ICIR}]
	Showing ex post IR is straightforward: for the informative signals, it is easy to see $\payment(\val, w) \le 0 \leq \alloc(\val, w) \cdot \val$ for all $v,w$;  for $s=*$, $\payment(\val, *)$ equals $\alloc(\val, *) \cdot v$ by definition.  We now show that mechanism is also IC. By reporting her true value, a buyer with value~$\val$ has expected utility
	\[
		\frac{1}{f(\val, *) + f(\val, \val)} \big( f(v, *) \cdot 0 - f(v, v) \cdot \payment(v,v) \big) = \frac{f(v, *)}{f(v, *) + f(v, v)} \cdot g(v).
	\]
	By telling her value as~$w$, the buyer would have expected utility
	\begin{align*}
		\frac{1}{f(v, *) + f(v, v)} \big( f(v, *) \cdot (v \cdot \alloc(\val, *) - \payment(\val, *)) + f(v, v) \cdot 0 \big) & = \frac{f(v, *)}{f(v, *) + f(v, v)} \cdot \frac{\ln w}{\ln H}(v-w) \\
																      & \le \frac{f(v, *)}{f(v, *) + f(v, v)} \cdot g(v),
	\end{align*}
	where the inequality holds by the definition of $g(z)$.
\end{proof}

\begin{proof}[Proof of \autoref{cl:tech_g}]
	
	Let $h(y) = \ln y\cdot(z-y)$.
	Observe that when $y > z$, $h(y) < 0 \le z \ln z$ and when $y \le z$, $h(y) \le z \ln z$, the first part of the claim is trivial.
	Now suppose $z \ge e$.
	Then $h'(y) = \frac{1}{y}(z-y)-\ln y$. Hence the maximum of $g(z)$ is achieved when $y(\ln y + 1) = z$. 
	This implies $\ln y \le \ln z - \ln \ln z + 1$, otherwise $y(\ln y+1) > \frac{ez}{\ln z}(\ln z-\ln \ln z + 2) \ge z$.
	Therefore,
	$g(z) = \frac{\ln y}{\ln H}(z-y) \leq \frac{\ln z - \ln \ln z + 1}{\ln H} \cdot z$ when $z \ge e$.

\end{proof}

\section{Proofs from \autoref{sec:single-proof}}
\label{app:mix}

\begin{proof}[Proof of \autoref{lem:bounded1}]
   The optimal revenue is bounded above by
   \begin{align*}
	   \sup_{(x,p) \in \cF} \cL(x, p, \lambda^*, \mu^*) & \leq
   \sup_{(x,p) \in \cF} \int_{s} \int_{t} x(t,s) \left( [tg(t,s)]_+ + h(t,s) \right) \dd t \: \dd s \\
      & =
      \int_{s} \int_{t} \left[[tg(t,s)]_+ + h(t,s) \right]_+ \dd t \: \dd s.
    \end{align*}
    Let $\typespace_+$ denote $\{(t, s): g(t, s) \geq 0\}$ and let $1_{\typespace_+}$ be its indicator function, then
    \begin{align*}
    & \int_{s} \int_{t} \left[[tg(t,s)]_+ + h(t,s) \right]_+ \dd t \: \dd s \\
	    =&
	    \int_s \int_t [tg(t,s) + h(t, s)]_+ 1_{\typespace_+} \dd t \: \dd s + \int_s \int_t [h(t, s)]_+ 1_{\overline{\typespace_+}} \dd t \: \dd s \\
	    \leq &
	    \int_s \int_t \left[ t g(t, s) + \frac 1 2 h(t, s) \right]_+ 1_{\typespace_+} \: \dd t \: \dd s + \frac 1 2 \int_s \int_t [h(t, s)]_+ 1_{\typespace_+} \: \dd t \: \dd s + \int_s \int_t [h(t,s)]_+ 1_{\overline{\typespace_+}} \: \dd t \: \dd s \\
	    \leq & 
	    \int_{s} \int_{t} \left( \left[ tg(t,s) + \frac{1}{2} h(t,s) \right]_+ + [h(t,s)]_+  \right) \: \dd t \: \dd s. \qedhere
   \end{align*}
\end{proof}

\begin{proof}[Proof for Mixture of $k$ joingly regular distributions]
	Adapting the same flow variables $\lambda(\cdot, \cdot)$ as the proof for \autoref{thm:bounded_bsic}, we only need to prove the following variation of \autoref{lem:bounded2}: if $f(\cdot, s)$ is a mixture of $k$ regular distributions, that is, $f(\type, \sig) = \sum_{i = 1}^k \alpha_i f_i(\type, \sig)$, where each $f_i$ is the density of a jointly regular distribution, and $\sum_i \alpha_i = 1$, with $\alpha_i \geq 0, \forall i$, then	
\begin{align*}
	   \int_{t} [h(t, s)]_+ \: \dd t & \leq 2k \cdot \DRev(s), \\
	\int_{t} \left[ t \cdot g(t, s) + \frac{1}{2} h(t, s) \right]_+ \: \dd t & \leq k \cdot \DRev(s).
\end{align*}

As before, we have
\begin{align*}
	h(t,s) & = 2tf(t,s) - 2 \int_{t' \geq t} f(t',s) \: \dd t', \\
\int_{t} \left[ t \cdot g(t,s) + \frac{1}{2}h(t,s), \right]_+ \: \dd t
& = \int_{t} t \cdot \left[ 2 \int_{t' \geq t} \frac{f(t',s)}{t'} \: \dd t' - \int_{t' \geq t} \frac{f(t',s)}{t} \: \dd t' \right]_+ \: \dd t.
\end{align*}
Using the fact that $[x+y]_+ \leq [x]_+ + [y]_+$, we have
\begin{align*}
	\int_{\type} [h(\type, \sig)]_+ \: \dd \type & \leq \sum_{i \in [k]} 2 \alpha_i \int_{\type} f(\type, \sig) [\virt_{\sig}^i (\type)]_+ \: \dd \type, \\
\int_{t} \left[ t \cdot g(t,s) + \frac{1}{2}h(t,s), \right]_+ \: \dd t
& \leq \sum_{i \in [k]} \alpha_i \int_{t} t \cdot \left[ 2 \int_{t' \geq t} \frac{f_i(t',s)}{t'} \: \dd t' - \int_{t' \geq t} \frac{f_i(t',s)}{t} \: \dd t' \right]_+ \: \dd t,
\end{align*}
where $\virt_{\sig}^i$ denotes the conditional virtual value of the $i$-th regular distribution given signal~$\sig$.  Now apply the argument in \autoref{sec:single-proof} to get
\begin{align*}
	\int_{\type} f(\type, \sig) [\virt_{\sig}^i(\type)]_+ \: \dd \type & \leq 2 \DRev_i(\sig),\\
	\int_{t} t \cdot \left[ 2 \int_{t' \geq t} \frac{f_i(t',s)}{t'} \: \dd t' - \int_{t' \geq t} \frac{f_i(t',s)}{t} \: \dd t' \right]_+ \: \dd t & \leq \DRev_i(s),
\end{align*}
where $\DRev_i(s)$ denotes the part of the optimal DSIC revenue for distribution $f_i(\cdot, s)$ from signal~$\sig$.
Observe that $\alpha_i \DRev_i(s) \leq \DRev(s)$ for all $i$, we conclude the proof.
\end{proof}

\section{Proofs from \autoref{sec:lookahead}}
\label{app:lookahead}

\begin{proof}[Proof of \autoref{cor:lookahead-examples}]
	Take either example from \autoref{sec:examples}, we construct a two-bidder auction where the revenue of the optimal BIC mechanism is at least the revenue in the example under private signals, and revenue of the optimal DSIC mechanism is arbitrarily close to the optimal revenue under disclosed signals.  In both examples, the signal space $\sigspace$ is finite, and we can number the signals by integers so that $\sigspace = \{1, 2, \cdots, |\sigspace|\}$.  For every $(\val, \sig)$ in the single-buyer example, let bidder~$1$'s value be $\val$, and bidder~$2$'s value be $\sig \cdot \eps / |\sigspace|$ for an arbitrarily small $\eps > 0$.  Take any single-buyer selling mechanism with private signals, use its allocation and payment rules for bidder~$1$ while taking bidder~$2$'s value as the signal (while never allocating or charging anything to bidder~$2$), then the resulting auction is BIC and extracts the same revenue as in the single-buyer example.  On the other hand, if pricing under disclosed signals extracts at most revenue $R$ from the single buyer, then any DSIC auction for the two-bidder setting can extract at most $R$ from bidder~$1$, and at most $\eps$ from bidder~$2$ (because of individual rationality).  The corollary follows by taking $\eps$ sufficiently small.

\end{proof}

\begin{proof}[Proof of \autoref{lem:lookahead-lagrange}]
We consider the linear program that returns the revenue of the optimal BIC multi-bidder auction without negative payments.  

\begin{align*}
	\max_{\allocs, \payments} & \int_{\types} f(\types) \sum_i \paymenti(\types) \: \dd \types &  \\
   \text{s.t.} & \int_{\typesmi} f(\typei, \typesmi) [\typei \cdot \alloci(\typei, \typesmi) - \paymenti(\typei,\typesmi)] \: \dd \typesmi \geq \int_{\typesmi} f(\typei,\typesmi) [\typei \cdot \alloci(\typei',\typesmi) - \paymenti(\typei',\typesmi)] \: \dd \typesmi \quad \forall i, \typei, \typei' \\
   & \typei \cdot \alloci(\types) - \paymenti(\types) \geq 0, \quad \forall i, \types \\
	& \paymenti(\types) \geq 0, \quad \forall i, \types \\
   &  \sum_i \alloci(\types) \leq 1, \quad \forall \types \\
	  & \alloci(\types) \geq 0, \quad \forall i, \types
\end{align*}

 This revenue is upper bounded by the following partial Lagrangian, for any $\mathbf{\mu, \lambda} \geq 0$:
\begin{align*}
	L(\bfmu, \bflambda) \coloneqq & 
	\max_{\allocs, \payments} \int_{\types} f(\types) \sum_i \paymenti(\types) \: \dd\types + \int_{\types} \sum_i \mu_i(\types) [\typei \cdot \alloci(\types) - \paymenti(\types)] \: \dd \types \\
   & + \sum_i \int_{\typei} \int_{\typei'} \lambda_i(\typei, \typei') \left\{ \int_{\typesmi} f(\typei, \typesmi) [\typei \cdot \alloci(\typei, \typesmi) - \paymenti(\typei,\typesmi) - \typei \cdot \alloci(\typei',\typesmi) + \paymenti(\typei',\typesmi)] \dd \typesmi \right\} \: \dd\typei \dd\typei' \\
   = & \max_{\allocs, \payments} \sum_i \int_{\types} \paymenti(\types) \left\{f(\types) - \mu_i(\types) + \int_{\typei'} \left[\lambda_i(\typei', \typei) f(\typei', \typesmi) - \lambda_i(\typei, \typei') f(\typei, \typesmi) \right] \dd \typei' \right\} \dd \types\\
       & + \sum_i \int_{\types} \alloci(\types) \left\{ \mu_i(\types) \typei + \int_{\typei'} \left[ \typei \cdot \lambda_i(\typei, \typei') f(\typei, \typesmi) - \typei' \cdot \lambda_i(\typei', \typei) f(\typei', \typesmi) \right] \dd \typei'
\right\} \dd \types \\
   \text{s.t.} \quad & \paymenti(\types) \geq 0, \qquad \forall i, \types \\
& \sum_i \alloci(\types) \leq 1, \qquad \forall \types \\
	  & \alloci(\types) \geq 0, \qquad  \forall i, \types
\end{align*}

Similarly to the proof of \autoref{thm:bounded_bsic}, we try the following dual variables:
\begin{align*}
\lambda_i^*(\typei, \typei') = \left\{ 
	\begin{array}{ll}
      \frac{2}{\typei}, & t > 0, t' \leq t; \\
      0, & \text{otherwise}.
\end{array}
\right.
\end{align*}
Again, define
\begin{align*}
	g_i(\types) \coloneqq f(\types) - \int_{\typei'} f(\types) \lambda^*_i(\typei, \typei') \: \dd t' + \int_{\typei'} f(\typei', \typesmi) \lambda^*_i(\typei', \typei) \: \dd \typei' = - f(\types)  + 2\int_{\typei' > \typei} \frac{f(\typei', \typesmi)}{\typei'} \: \dd \typei',\\
	h_i(\types) = \int_{\typei'} \typei \cdot \lambda^*_i(\typei, \typei') f(\types) \: \dd \typei' - \int_{\typei'} \typei' \cdot \lambda_i^*(\typei', \typei) f(\typei', \typesmi) \: \dd \typei' = 2f(\typesmi)\left[ \typei f(\typei | \typesmi) - (1 - F(\typei | \typesmi) \right].
\end{align*}

Let $\mu_i(\types)$ be $[g_i(\types)]_+$.  Then 
\begin{align*}
	L(\mu^*, \lambda^*) \leq & \max_{\allocs \geq \mathbf{0} } \int_{\types} f(\types) \sum_i \alloci(\types) \left\{ [g_i(\types) \typei]_+ + h_i(\types) \right\} \: \dd \types, \\
\text{s.t.} \quad & \sum_i \alloci(\types) \leq 1, \qquad \forall \types.
\end{align*}
\end{proof}

\begin{proof}[Proof of \autoref{cl:dual-bound}]
The claim is easily seen by a case analysis.  For $g_i(\types) \leq 0$, $[g_i(\types)\typei]_+ + h_i(\types) = h_i(\types) \leq 2f(\types) \typei$, as $1 - F(\typei | \typesmi) \geq 0$.  For $g_i(\types) \geq 0$, we have
\begin{align*}
	[g_i(\types) \typei]_+ + h_i(\types) = \typei f(\types) + 2 \left[\int_{\typei' > \typei} \frac{\typei f(\typei', \typesmi)}{\typei'} \: \dd \typei' - \int_{\typei' > \typei} \frac{\typei' f(\typei', \typesmi)}{\typei'} \: \dd \typei' \right] \leq \typei f(\types).
\end{align*}
\end{proof}

\begin{lemma}
	\label{lem:lookahead-bounded2}
Let $\Rev_i$ be the revenue extracted from bidder~$i$ by the lookahead auction, then 
\begin{align*}
	\max_{\mathbf 0 \leq \alloci(\types) \leq \mathbf 1} \alloci(\types) \int_{\types \in U_i} ([\typei g_i(\types) ]_+ + h_i(\types) ) \: \dd \types,
\end{align*}
is bounded by $3\Rev_i$, where $g_i$ and $h_i$ are defined as in~\eqref{eq:lookahead-two-term},
\end{lemma}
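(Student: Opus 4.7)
The plan is to mirror the proof of \autoref{lem:bounded2} from the single-buyer analysis, with $\typesmi$ playing the role of the signal $s$ and the integration restricted to the slice $U_i = \{\types : \typei \geq m(\typesmi)\}$, where $m(\typesmi) \coloneqq \max_{j \neq i}\typej$. First I would observe that for any fixed type profile the integrand $[\typei g_i(\types)]_+ + h_i(\types)$ is pointwise maximized by taking $\alloci(\types) = 1$ whenever it is positive and $0$ otherwise, so the quantity to bound equals $\int_{\types \in U_i} \bigl[[\typei g_i(\types)]_+ + h_i(\types)\bigr]_+ \, \dd \types$. Applying the elementary inequality used in the proof of \autoref{lem:bounded1} (splitting by the sign of $g_i$) then gives the upper bound
\[
\int_{\types \in U_i} \left[\typei g_i(\types) + \tfrac{1}{2} h_i(\types)\right]_+ \dd \types \;+\; \int_{\types \in U_i} [h_i(\types)]_+ \, \dd \types,
\]
and it suffices to bound the first term by $\Rev_i$ and the second by $2\Rev_i$.

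For the $[h_i]_+$ term, recall $h_i(\types) = 2 f(\typesmi) \bigl[\typei f(\typei \mid \typesmi) - (1 - F(\typei \mid \typesmi))\bigr] = 2 f(\types) \varphi(\typei \mid \typesmi)$. By joint regularity, $\varphi(\typei \mid \typesmi)$ is monotone in $\typei$, so by \autoref{lem:Myerson} the inner integral $\int_{\typei \geq m(\typesmi)} f(\typei \mid \typesmi) [\varphi(\typei \mid \typesmi)]_+ \dd \typei$ equals $\max_{p \geq \max(r^*(\typesmi), m(\typesmi))} p\, (1 - F(p \mid \typesmi))$, which is at most $\max_{p \geq m(\typesmi)} p\,(1 - F(p \mid \typesmi))$; the latter is exactly the per-$\typesmi$ revenue that the lookahead auction extracts from bidder $i$, since the lookahead auction posts the optimal take-it-or-leave-it price subject to $p \geq m(\typesmi)$. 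Multiplying by $2f(\typesmi)$ and integrating over $\typesmi$ yields the bound $2 \Rev_i$.

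For the $[\typei g_i + \tfrac{1}{2} h_i]_+$ term, a direct computation (paralleling the single-buyer case) gives $\typei g_i(\types) + \tfrac{1}{2} h_i(\types) = f(\typesmi) \cdot \typei \cdot \tilde\psi_{\typesmi}(\typei)$, where
\[
\tilde\psi_{\typesmi}(\typei) = 2 \int_{\typei' > \typei} \frac{f(\typei' \mid \typesmi)}{\typei'} \dd \typei' - \int_{\typei' > \typei} \frac{f(\typei' \mid \typesmi)}{\typei} \dd \typei'
\]
is the same functional form as $\psi_s$ from \autoref{lem:bounded2}. Joint regularity implies $\tilde\psi_{\typesmi}$ changes sign at most twice, so there exist $0 \leq l \leq u$ with $\tilde\psi_{\typesmi}(\typei) \geq 0$ iff $\typei \in [l,u]$; and the same integration-by-parts identity $\int_{\typei \leq r} \typei \tilde\psi_{\typesmi}(\typei) \dd \typei = \int_{\typei' > r} (r^2/\typei' - r) f(\typei' \mid \typesmi) \dd \typei' \leq 0$ holds. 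Writing $\int_{\typei \geq m(\typesmi)} [\typei \tilde\psi_{\typesmi}(\typei)]_+ \dd \typei = \int_{\max(l,m)}^{u} \typei \tilde\psi_{\typesmi}(\typei) \dd \typei = \int_{\typei \leq u} - \int_{\typei \leq \max(l,m)}$ and using both signs of the identity shows this is at most $\max(l, m(\typesmi)) \cdot (1 - F(\max(l, m(\typesmi)) \mid \typesmi))$, i.e., the revenue of pricing at $\max(l, m(\typesmi)) \geq m(\typesmi)$, a valid lookahead price. Multiplying by $f(\typesmi)$ and integrating gives at most $\Rev_i$.

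The main obstacle is precisely the restriction to $\typei \geq m(\typesmi)$: the clean single-buyer argument evaluates the identity at the right endpoint $u$ to zero out one boundary term, and that endpoint may now lie below $m(\typesmi)$. The fix outlined above — splitting the integral at $u$ and at $\max(l, m(\typesmi))$ and exploiting that the identity is nonpositive at \emph{both} endpoints — yields a price-at-$\max(l,m)$ upper bound that remains feasible for the lookahead auction. Finally, for a mixture of $k$ jointly regular distributions, the same decomposition-by-component argument used in \autoref{app:mix} applies verbatim to each bidder, degrading both bounds by a factor of $k$ and producing the overall factor $3k$.
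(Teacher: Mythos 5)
Your proposal is correct and follows essentially the same route as the paper's proof: split off the $[h_i]_+$ term (bounded by $2\Rev_i$ via the conditional Myerson virtual value and regularity of the truncated distribution), and bound the $[\typei g_i + \tfrac12 h_i]_+$ term by evaluating the $\int_{\typei\le r} \typei\psi$ identity at $u$ and at $\max(l,m(\typesmi))$, noting that $\max(l,m(\typesmi))$ is a lookahead-feasible price. If anything your write-up is slightly more explicit than the paper's (whose displayed inequalities restrict some integrals to $U_i$ in a way that requires a charitable reading), but the underlying decomposition, the sign-change argument from regularity, and the appeal to lookahead feasibility are identical.
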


\begin{proof}
	By the same argument as in the proof of \autoref{lem:bounded1}, we have
	\begin{align*}
		\max_{\mathbf 0 \leq \alloci(\types) \leq \mathbf 1} \alloci(\types) \int_{\types \in U_i} ([\typei g_i(\types) ]_+ + h_i(\types) ) \: \dd \types  \leq \int_{\types \in U_i} [\typei g_i(\types) + \frac 1 2 h_i(\types) ]_+ + [h_i(\types) ]_+ \: \dd \types.
	\end{align*}
	By the same calculation as in the proof of \autoref{lem:bounded2}, $h_i(\types)$ is $2 f(\types) \virt_i(\typei | \typesmi)$, where $\virt_i(\typei | \typesmi)$ is the virtual value of~$\typei$ in the distribution conditioning on $\typesmi$.  By joint regularity, $\typedist(\typei | \typesmi)$ is regular, and so $\typedist(\typei | \typesmi, \typei > \max_{i'\neq i} \typei[i'])$ is also regular.  Therefore 
	\begin{align*}
		\int_{\types \in U_i} [h_i(\types)]_+ \: \dd \types = 2 \Rev_i. 
	\end{align*}
	
	As for the term $[\typei g_i(\types) + \tfrac 1 2 h_i(\types)]_+$, from the calculation in \autoref{lem:bounded2}, we know that, fixing any $\typesmi$, $\typei g_i(\types)$ is nonnegative only on an interval $[l, u]$, and 
	\begin{align}
		\int_{\typei \leq u: (\typei, \typesmi) \in U_i} \left( \typei g_i(\types) + \frac 1 2 h_i(\types) \right) \: \dd \typei & \leq 0, \nonumber \\
		- \int_{\typei \leq l: (\typei, \typesmi) \in U_i} \left( \typei g_i(\types) + \frac 1 2 h_i(\types) \right) \: \dd \typei & \leq l \int_{\typei \geq l: (\typei, \typesmi) \in U_i} f(\typei, \typesmi) \: \dd \typei \nonumber \\
   & \leq \max(l, \max_{i' \neq i} \typei[i'] ) \int_{\typei \geq l: (\typei, \typesmi) \in U_i} f(\typei, \typesmi) \: \dd \typei,
		\label{eq:psi-lookahead}
	\end{align}
	where the RHS of \eqref{eq:psi-lookahead} is upper bounded by the revenue of the lookahead auction when the bidders other than~$i$ bid~$\typesmi$.  Therefore,
	\begin{align*}
		\int_{\types \in U_i} [\typei g_i(\types) + \frac 1 2 h_i(\types)]_+ \: \dd \types \leq \Rev_i.
	\end{align*}
\end{proof}

\bibliographystyle{apalike}
\bibliography{bibs}

\end{document}